\titleformat*{\section}{\normalsize\bfseries}
\titleformat*{\subsection}{\small\bfseries}
\titleformat*{\subsubsection}{\normalize\bfseries}
\titleformat*{\paragraph}{\normalize\bfseries}
\titleformat*{\subparagraph}{\normalize\bfseries}
\newtheorem{theorem}{Theorem}[section]
\newtheorem{corollary}[theorem]{Corollary}
\newtheorem{definition}[theorem]{Definition}
\newtheorem{remark}[theorem]{Remark}
\newtheorem{notation}[theorem]{Notation}
\newcommand{\de}{\mathrm{d}}
\DeclareMathAlphabet{\mathpzc}{OT1}{pzc}{m}{it}
\providecommand{\keywords}[1]{\textbf{\textit{Keywords: }}#1}
\title{Ollivier Ricci Curvature of Directed Hypergraphs}
\author{Marzieh Eidi\footnotemark[1]\ \and J\"urgen Jost\footnotemark[1]\ \footnotemark[2]}
\begin{document}

\maketitle
\renewcommand{\thefootnote}{\fnsymbol{footnote}}
\footnotetext[1]{Max-Planck Institut for Mathematics in the Sciences, Leipzig, Germany}
\footnotetext[2]{Santa Fe Institute , Santa Fe , New Mexico , USA}

\renewcommand{\thefootnote}{\arabic{footnote}}

\begin{abstract}
We develop a definition of Ricci curvature on directed hypergraphs and explore the consequences of that definition. The definition generalizes Ollivier's definition for graphs. It involves a carefully designed optimal transport problem between sets of vertices.

\end{abstract}
\keywords{Directed hypergraph, Ollivier Ricci curvature, constant Ricci curvature, discrete optimal transport, Wasserstein distance}

\section{Introduction}
In Riemannian geometry, the  curvature of a space quantifies its non-flatness.  Among the various curvature notions that are of importance in Riemannian geometry, Ricci curvature  quantifies  this deviation by comparing the average distance between two sufficiently close points and the distance between two small balls around them.  Bounds on curvatures can be used to  connect the  geometry of a Riemannian manifold with its topology, or to control stochastic processes on it. More precisely, a  positive lower bound for the Ricci curvature yields the Bonnet-Myers theorem, which  bounds the diameter of the space in terms of such a lower Ricci bound, the Lichnerowicz theorem for the spectral gap of the Laplacian, a control on mixing properties of Brownian motion and the Levy-Gromov theorem for isometric inequalities and concentration of measures. In view of these strong implications, it is desirable to extend this to  metric spaces that are more general than Riemannian manifolds. With this motivation, several generalized curvature notions have been proposed  for non-smooth or discrete structures. In particular,  Yann Ollivier \cite{Ollivier2009} defined a notion of Ricci curvature on metric spaces equipped with a Markov chain, and extended some of the mentioned results for  positively curved manifolds. His definition  compares the Wasserstein distance between  probability measures supported in the neighborhoods of two given points with the distance between these points. The Wasserstein distance between  two probability measures is defined as the minimal cost needed for  transporting one into the other. That is, an optimal transport problem has to be solved. --  On Riemannian manifolds, this recovers the original notion of Ricci curvature (up to some scaling factort), and at the same time, it naturally applies to discrete metric spaces like graphs.  Recently, this curvature has been applied in  network analysis, to  determine spreading or local clustering in  networks modelled by undirected or directed graphs, see for instance \cite{samal2018comparative}.

On the other hand, many real data sets are naturally modelled by structures that are somewhat more general than graphs, because they may contain relations involving more than two elements. For instance, chemical reactions typically involve more than two substances. This leads to hypergraphs. These hypergraphs may be undirected, as for instance for coauthorship relations, but they may also be directed. Taking the example of  chemical reactions, they are typically not reversible, but rather transform a set of educts into a set of products.  A definion of the Ollivier Ricci curvature of directed graphs was firstly proposed and investigated in \cite{yamada2016ricci} where out-out directions for assigning measures are used. For that, however,  one needs to assume strong connectivity of the underlying directed graphs in order to find transportation plans with finite cost, but this does not hold in many real directed  networks. Therefore, in this paper, we work with  in-out directions, which does not require such a strong assumption. The resulting theory is rather different from that of \cite{yamada2016ricci}. The first extension of  the notion of Ollivier Ricci curvature to hypergraphs was proposed in \cite{asoodeh2018curvature}, using  a multi-marginal optimal transport problem to define curvature. Because of that, the resulting curvature in the end is an analogue of Riemannian scalar rather than Ricci curvature. Also, it does not directly apply to  directed hypergraphs. In this paper, we therefore propose a notion of directed hypergraph curvature  that extends  Ricci rather than scalar curvature. Since in our setting, hyperedges are directed and each direction separates the vertices of the  hyperedge into two classes, similar to directed graphs, we consider a  double marginal optimal transport problem. We study some implications of our definition and then take a closer look at hypergraphs of constant Ricci curvature.

\section{Ricci curvature}

Ricci curvature is a fundamental concept  from Riemannian Geometry  (see for instance \cite{Jost17a}) that more recently has been extended to a discrete setting. 
\\
In a Riemannian manifold $M$ of dimension $N$, Ricci curvature can be defined in several equivalent ways. What is relevant for the extension to the discrete setting is that it measures the local amount of non-flatness of the manifold by comparing the distance between  two small  balls with the distance of their centers when these centers are sufficiently close to each other. If $w$ is a  unit tangent vector at a point $x$ in a Riemannian manifold, $\varepsilon,\ \delta >0$ smaller than the injectivity radius of $M$ and $y$ is the endpoint of $\exp_x\delta w$ and hence has distance $\delta$ from $x$ and  $S_x$ is the sphere of radius $\varepsilon$ in the tangent space at $x$ (and hence $\exp_xS_x$ is the sphere of radius $\varepsilon$ around $x$ in the manifold, then if $S_x$ is mapped to $S_y$ using parallel transport, the average distance between a point of $\exp_xS_x$ and its image in $\exp_yS_y$ is
\[\delta \left(1-\frac{\varepsilon^2}{2N}\mathrm{Ric}\ (w,w)+O(\varepsilon^3+\varepsilon^2\delta)\right)\]
when $(\varepsilon,\delta)\to 0$. This follows from standard Jacobi field estimates. These estimates involve the sectional curvature, but summing over all directions orthogonal to the geodesic connecting $x$ and $y$ results in a Ricci curvature term. Here, one should think of $\varepsilon$ as being smaller than $\delta$, and $O(\varepsilon^3)$  then simply indicates a higher term, whereas $O(\varepsilon^2\delta)$ is needed when the Ricci curvature is not constant.
If balls are used instead of spheres, the scaling factor is $\frac{\varepsilon^2}{2(N+2)}$ instead of $\frac{\varepsilon^2}{2N}$\cite{ollivier2011visual}.
\begin{center}
\hspace{-0.2 cm}\vspace{-0.01 cm} \includegraphics[width=4 cm]{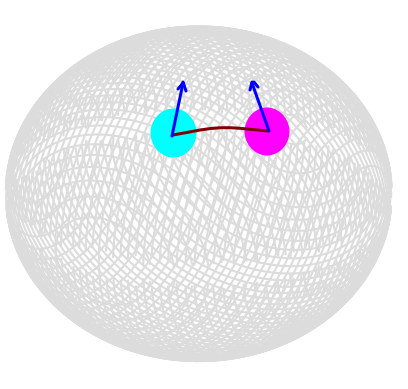} 
\hspace{0.4 cm}\vspace{-0.01 cm} \includegraphics[width=5.7 cm]{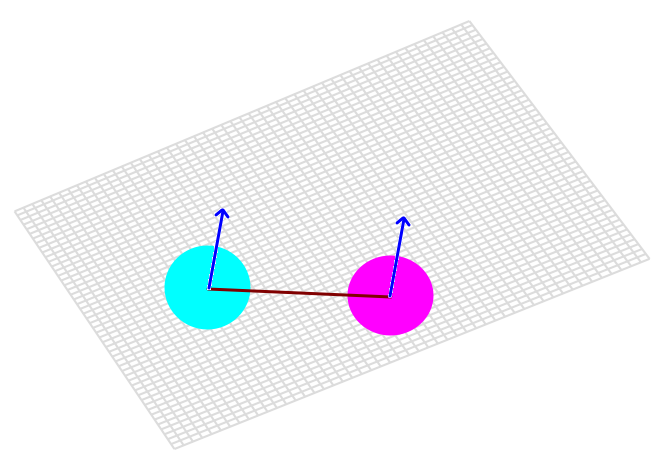} 
\hspace{0.7 cm}\vspace{-0.01 cm} \includegraphics[width=5.5 cm]{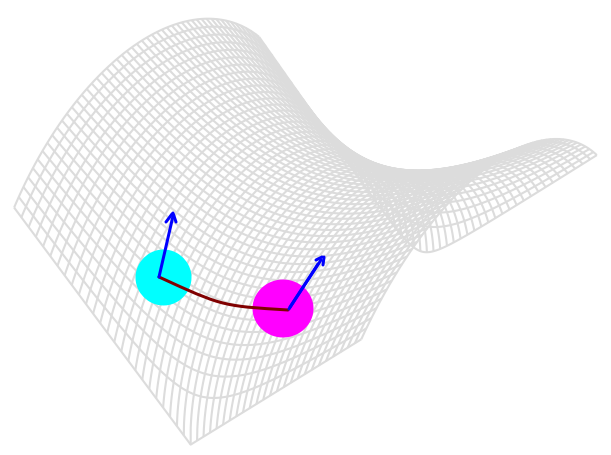} 
\end{center}

Thus, if balls in average are closer than their centers (left figure) , Ricci curvature in the direction of $xy$ is positive. If the manifold is locally flat, Euclidian (middle figure), then the two distances coincide. Most manifolds, however, are locally negatively curved (right figure). \cite{Lohkamp92}.

This local characterization is the key property for defining Ricci curvature notions in more general settings than smooth manifolds. In 2007, Ollivier defined a notion of Ricci curvature, called Ollivier (coarse) Ricci curvature,  on  metric spaces equipped with a random walk $m$:

\begin{definition}\cite{Ollivier2009}
Let $(X,d)$ be a metric space with a random walk $m$, let $x,y\in X$ be two distinct points. The Ricci curvature of $(X,d,m)$ in the direction $(x,y)$ is
\[\kappa(x,y):=1-\frac{W_1(m_x,m_y)}{d(x,y)}\]
where $W_1$ is the 1-Wasserstein distance between $m_x$ and $m_y$ on $X$:
\[W_1(m_x,m_y):=\inf_{\mathcal{E}\in \Pi(m_x,m_y)}\int_{(x,y)\in X\times X} d(x,y)\de \mathcal{E}(x,y)\]
and $\Pi(m_x,m_y)$ is the set of measures on $X\times X$ (coupling between random walks projecting to $m_x$ and $m_y$).
\end{definition}
Recall that if $(X,d)$ is a Polish metric space equipped with its Borel $\sigma$-algebra, a random walk $m$ on $X$ is a family of probability measures $\{m_x|x\in X\}$ satisfying the following conditions :
\begin{itemize}
\item The map $x\to m_x$ is measurable.
\item Each $m_x$ has finite first moment.
\end{itemize}
So, here instead of taking metric balls around two close enough points we consider the well-known Wasserstein distance, transportation or Earthmover distance, between two probability measures $m_x$ and $m_y$  corresponding to two random walks starting at $x$ and $y$ respectively. When $(X,d,m)$ is Riemannian manifold equipped with its Riemannian volume measure, this notion recovers the Riemannian Ricci curvature in the direction of $xy$ (up to some scaling factor)\cite{Ollivier2009}.

In network analysis, this measure is  a very useful tool to determine clustering and coherence in the network \cite{jost2014ollivier,samal2018comparative}, and since it is based on Markov chains, it is very well suited for capturing diffusion and stochastic process in the network.

In order to extend this notion to hypergraphs, an appropriate  definition of random walks on  hypergraphs is needed  which corresponds to the multi marginal transport problem. 
	Recall that an undirected hypergraph $H=(V,E)$ consists of a set $V$ of vertices and a multiset $E$ of subsets of $V$, called hyperedges $(\forall e\in E, |e|\leq |V|)$. Therefore, as a generalization of edges in graphs which connect two vertices, hyperedges represent connections between any number of vertices. In \cite{asoodeh2018curvature}, this lead to 
\begin{definition}
[Coarse scalar curvature]
For a collection of $n$ points $X_n:=\{x_1,\ldots,x_n\}$ in a metric space $(X,d)$ with random walk $m:=\{m_x|x\in X\}$, coarse scalar curvature of $X_n$ is defined as : 
\[\kappa(X_n):=1-\frac{W_1(X_n)}{c(x_1,x_2,\ldots,x_n)}\]
where $W_1(X_n)$ is the minimum of the multi-marginal optimal transport problem and  is equivalent  to the minimum of the Wasserstein barycenter problem:

\[W_1(X_n)=\inf_{\nu \in \Pi(m_x,\nu)}\sum W_1(m_x,\nu)\] 
and $c(x_1,\ldots,x_n)=\inf_{z\in X}\sum_{i=1}^n d_X(x_i,z)$\\

If $\{x_1,\ldots,x_n\}$ are the vertices of a hypergraph connected by a hyperedge $e$, then the above formula for scalar curvature is  an extension for the edge ricci curvature in an undirected graph\cite{jost2014ollivier} (when $n=2$ in every hyperedge). 
\end{definition}
\section{Transport plans and curvature of directed hypergraphs}
 Similarly as in directed graphs, in directed hypergraphs, every hyperedge $e$ represents a directional relation between two subsets $A$ (tail), $B$ (head) of vertices. When  $\forall e\in E, |A|=|B|=1$, then a directed hypergraph is simply a directed graph.\\
We shall now present the basic definition on which this paper rests. 
\begin{notation}
For $x_i\in A$,  $d_{x_i^{in}}$ is the number of incoming hyperedges  to $x_i$ (those hyperedges in which include $x_i$ in the head set of their vertices),  and for $y_j\in B$,  $d_{y_j^{out}}$ is the number of outgoing hyperedges from $y_j$ (those hyperedges which have $y_j$ in the tail set of their vertices).
\end{notation}

\begin{definition}
	Let $H=(V,E)$ be an unweighted directed hypergraph and $e\in E$ an arbitrary directed hyperedge such that $A=\{x_1,\ldots, x_n\}\xrightarrow{e}B=\{y_1,\ldots, y_m\}$ $(n,m\leq |V|)$. We define the Ollivier Ricci curvature of this hyperedge as
	\[\kappa(e):=1-W(\mu_{A^{in}},\mu_{B^{out}})\]
	where the  probability measures $\mu_{A^{in}}$ and $\mu_{B^{out}}$ are  defined on $V$ as follows:
	
	$\mu_{A^{in}}=\sum_{i=1}^n \mu_{x_i}$ where $ \forall 1\leq i\leq n $ and $\forall z\in V(H)$
	\[\mu_{x_i}(z)=\begin{dcases}
	0& z=x_i\quad \& \quad  d_{x_i^{in}}\neq 0\\\frac{1}{n}& z=x_i\quad\& \quad d_{x_i^{in}}=0\\\sum_{e^\prime:z\to x_i}\frac{1}{n\times d_{x_i^{in}}\times \#tail(e^\prime)}& z\neq x_i\quad \&\quad  \exists e^\prime:z\to x_i\\0 &z\neq x_i\quad \& \quad \not\exists e^\prime :z\to x_i
	\end{dcases}\]
and 	likewise \\
	\\$\mu_{B^{out}}=\sum_{j=1}^m\mu_{y_j}$ where $\forall 1\leq j\leq m, z\in V(H)$:
	 \[\mu_{y_j}(z)=\begin{dcases}
	 0& z=y_j\quad \& \quad  d_{y_j^{out}}\neq 0\\\frac{1}{m}& z=y_j\quad \& \quad  d_{y_j^{out}}=0\\\sum_{e^\prime:y_j\to z}\frac{1}{m\times d_{y_j^{out}}\times \#head(e^\prime)}& z\neq y_j\quad \& \quad  \exists e^\prime:y_j\to z\\0 &z\neq y_j\quad \& \quad  \not\exists e^\prime :y_j\to z
	 \end{dcases}\]
	 and $W(\mu_{A^{in}},\mu_{B^{out}})$ is the 1-Wasserstein, optimal transportation or earth mover distance  distance between these two discrete measures as follows: 
	 \[W(\mu_{A^{in}},\mu_{B^{B}})=\min \sum_{u\to A}\sum_{B\to v}d(u,v)\mathcal{E}(u,v)\]
	 where $\mathcal{E}(u,v)$ represents the amount of the mass that should be moved from $u\in A^{in}(u\to A)$ to $v\in B^{out}(B\to v)$, $d(u,v)$ is the minimum number of directed hyperedges to be passed for going from $u$ to $v$ and the minimum is taken over all couplings $\mathcal{E}$ between $\mu_{A^{in}}$ and $\mu_{B^{out}}$ which satisfy
	 \[\sum_{u\to A}\mathcal{E}(u,v)=\sum_{j=1}^m \mu_{y_j}(v)\text{ and } \sum_{B\to v}\mathcal{E}(u,v)=\sum_{i=1}^n \mu_{x_i}(u)\]
\end{definition}
For example, in the directed hypergraph depicted below, for computing the curvature of the yellow hyperedge, we assign masses and holes to neighbours of the left set which includes  3 vertices (separated by dots)  and the right set (which includes 2 vertices and separated in the right ) as follows : 
\begin{center}
\vspace{-0.01 cm} \includegraphics[width=5 cm]{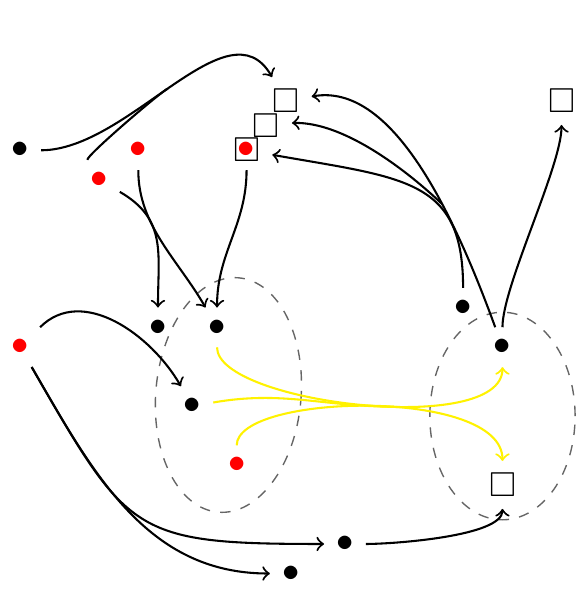}
\end{center}
It ia easy to check that in (any) optimal  optimal transport plan, $1/12 $ of the mass need not to be moved and $1/3$ of it is moved with distance one, $1/6$ is moved with distance two and the remained part is moved with distance three. Hence the  curvature is $-11/12$.
We also point out that while optimal transport plans always exist in our finite setting, they need not be unique. \\
We can construct for each  directed hypergraph a corresponding directed graph. That graph has the same set of vertices as the directed hypergraph and for each hyperedge, we draw an edge from each vertex in its tail to every vertex in its head. Thus, a directed hyperedge  $A=\{x_1,\ldots, x_n\}\xrightarrow{e}B=\{y_1,\ldots, y_m\}$ has a corresponding set of directed edges with $nm$ elements. Note, however, that  there might be directed graphs that correspond to more than one directed hypergraph.
\begin{theorem}\label{lower}
The curvature of a hyperedge $e:A=\{x_1,\ldots,x_n\}\to B=\{y_1,\ldots, y_m\}$ is bounded from below by the minimum of the curvatures of directed edges in its corresponding directed graph.
\end{theorem}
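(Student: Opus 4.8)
The plan is to reduce the statement to a convexity property of the $1$-Wasserstein distance. Write $\kappa(e)=1-W(\mu_{A^{in}},\mu_{B^{out}})$, and denote by $\kappa(x_i\to y_j)=1-W(p_{x_i},p_{y_j})$ the curvature of the directed edge $x_i\to y_j$ in the corresponding directed graph, where $p_{x_i}:=n\,\mu_{x_i}$ and $p_{y_j}:=m\,\mu_{y_j}$ are the single-vertex probability measures. Since $\min_{i,j}(1-W(p_{x_i},p_{y_j}))=1-\max_{i,j}W(p_{x_i},p_{y_j})$, the claim $\kappa(e)\geq\min_{i,j}\kappa(x_i\to y_j)$ is equivalent to
\[W(\mu_{A^{in}},\mu_{B^{out}})\leq\max_{i,j}W(p_{x_i},p_{y_j}).\]
The first observation is that each $\mu_{x_i}$ has total mass $1/n$ and each $\mu_{y_j}$ has total mass $1/m$, so $p_{x_i}$ and $p_{y_j}$ are genuine probability measures, and both aggregate measures can be rewritten as uniform averages over the $nm$ pairs:
\[\mu_{A^{in}}=\frac{1}{nm}\sum_{i=1}^{n}\sum_{j=1}^{m}p_{x_i},\qquad \mu_{B^{out}}=\frac{1}{nm}\sum_{i=1}^{n}\sum_{j=1}^{m}p_{y_j}.\]

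The core of the argument is then to assemble a single transport plan for the left-hand side out of the optimal plans for the pairs. For each pair $(i,j)$ let $\mathcal{E}_{ij}$ be an optimal coupling realizing $W(p_{x_i},p_{y_j})$, and set $\mathcal{E}:=\frac{1}{nm}\sum_{i,j}\mathcal{E}_{ij}$. Because the first marginal of $\mathcal{E}_{ij}$ is $p_{x_i}$ and its second marginal is $p_{y_j}$, summing and using the two displayed identities shows that $\mathcal{E}$ has first marginal $\mu_{A^{in}}$ and second marginal $\mu_{B^{out}}$; hence $\mathcal{E}$ is an admissible coupling in the transport problem defining $W(\mu_{A^{in}},\mu_{B^{out}})$. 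Since the ground distance $d$ is the same function in both problems, the cost of $\mathcal{E}$ equals $\frac{1}{nm}\sum_{i,j}W(p_{x_i},p_{y_j})$, and minimality of the Wasserstein distance gives
\[W(\mu_{A^{in}},\mu_{B^{out}})\leq\frac{1}{nm}\sum_{i,j}W(p_{x_i},p_{y_j})\leq\max_{i,j}W(p_{x_i},p_{y_j}),\]
which is exactly the required inequality. In short, the only structural fact used is that the optimal transport cost $W$ is jointly convex under forming convex combinations of measures with matching weights.

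The step requiring the most care, and the main obstacle, is the identification made in the first paragraph: one must check that $1-W(p_{x_i},p_{y_j})$ really is the Ollivier curvature assigned to the edge $x_i\to y_j$ of the corresponding directed graph. This has two ingredients. First, the distance: a single hyperedge-hop from a tail vertex to a head vertex corresponds to a single edge-hop in the constructed graph, so shortest-path distances, and therefore the ground costs $d(u,v)$, coincide in the two settings, which keeps the two transport problems comparable. Second, the measures: one must verify that $p_{x_i}=n\,\mu_{x_i}$ is precisely the in-measure placed at $x_i$ and $p_{y_j}=m\,\mu_{y_j}$ the out-measure placed at $y_j$ when the curvature definition is specialized to the corresponding graph. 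Here one has to be attentive to the normalization, since the hypergraph measure spreads mass first uniformly over the incoming hyperedges and then uniformly over each tail, and one must confirm that this matches the convention used on the graph edges. Once these identifications are in place the convexity computation above closes the proof; note that we never invoke uniqueness of optimal plans, only their existence in the finite setting, consistent with the remark preceding the theorem.
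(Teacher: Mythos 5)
Your argument is correct and is essentially the paper's own proof: you form $\mathcal{E}=\frac{1}{nm}\sum_{i,j}\mathcal{E}_{ij}$ from the pairwise optimal plans, check that its marginals are $\mu_{A^{in}}$ and $\mu_{B^{out}}$, and conclude $W(\mu_{A^{in}},\mu_{B^{out}})\leq \frac{1}{nm}\sum_{i,j}W(p_{x_i},p_{y_j})\leq \max_{i,j}W(p_{x_i},p_{y_j})$ by minimality of the Wasserstein cost, exactly as the paper does. The normalization caveat you raise at the end is a real subtlety the paper glosses over --- a graph-native in-measure (uniform over incoming graph edges) would differ from $n\,\mu_{x_i}$ whenever the incoming hyperedges of $x_i$ have unequal tail sizes --- but it is resolved precisely as your computation requires: the edge curvatures $\kappa(e_{ij})$ must be read with the rescaled hypergraph measures $n\,\mu_{x_i}$ and $m\,\mu_{y_j}$ (the paper's $\mu_{x_i^{in}},\mu_{y_j^{out}}$), which is what makes the marginal identity, and hence the whole argument, go through.
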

\begin{proof}
Let $\mathcal{E}_{ij}$ be the optimal transport plan for the edge $e_{ij}:x_i\to y_j$, i.e.
\[W(\mu_{x_i^{in}},\mu_{y_j^{out}})=\sum_{u,v\in V}d(u,v)\mathcal{E}_{ij} (u,v)\]
Then
\[\mathcal{E}:=\frac{1}{mn}\sum_{i=1}^n\sum_{j=1}^m\mathcal{E}_{ij}\]
has the marginal distributions $\mu_{A^{in}}$ and $\mu_{B^{out}}$. Therefore
\[W(\mu_{A^{in}},\mu_{B^{out}})\leq \sum_{u,v\in V}d(u,v)\mathcal{E} (u,v)=\frac{1}{mn}\sum_{i=1}^n\sum_{j=1}^m W(\mu_{x_i^{in}},\mu_{y_j^{out}})\leq \max_{\substack{{1\leq i\leq n}\\{1\leq j\leq m}}}W(\mu_{x_i^{in}},\mu_{y_j^{out}})\]
and so $\kappa(e)\geq \min_{\substack{{1\leq i\leq n}\\{1\leq j\leq m}}}\kappa(e_{ij})$.
\end{proof}
\begin{remark}
The maximum of the curvatures of directed edges  corresponding to a directed hyperedge is not necessarily an upper bound for its curvature, as one can see from the following picture where the curvature of the hyperedge with red colour is one and the curvature of all it's four corresponding directed edges is $-1/2$ .

\begin{center}
\vspace{-0.01 cm} \includegraphics[width=3 cm]{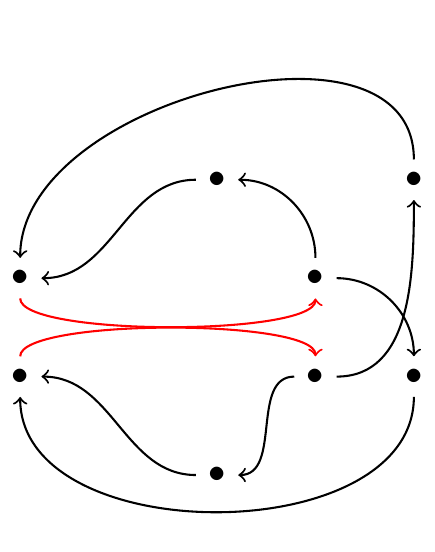}
\end{center}

\end{remark}
\begin{theorem}
For a directed hyperedge $e:A\to B$ we have 
\[W(\mu_{A^{in}},\mu_{B^{out}})\geq \sup \left( \sum_{u\to A} f(u)\mu_{A^{in}}(u)-\sum_{B\to v}f(v)\mu_{B^{out}}(v)\right)\]
where the supremum is taken over all functions on $V(H)$ with $f(u)-f(v)\leq d(u,v)$.
\end{theorem}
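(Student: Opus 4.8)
The plan is to recognize this inequality as the weak-duality half of Kantorovich--Rubinstein duality for the linear program that defines $W$. The essential idea is that any admissible function $f$ and any transport plan $\mathcal{E}$ can be paired against each other: the Lipschitz-type constraint $f(u)-f(v)\le d(u,v)$ converts the transport cost into a lower bound for the dual objective, and the marginal constraints collapse the double sum into the two single sums appearing on the right-hand side.

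First I would fix an arbitrary function $f$ on $V(H)$ with $f(u)-f(v)\le d(u,v)$ for all $u,v$, together with an arbitrary coupling $\mathcal{E}\in\Pi(\mu_{A^{in}},\mu_{B^{out}})$. Since every $\mathcal{E}(u,v)\ge 0$, multiplying the constraint by $\mathcal{E}(u,v)$ and summing preserves the inequality, giving
\[\sum_{u,v}d(u,v)\,\mathcal{E}(u,v)\ \ge\ \sum_{u,v}\bigl(f(u)-f(v)\bigr)\mathcal{E}(u,v).\]
Next I would split the right-hand side into two terms and collapse each using the marginal constraints from the definition, namely $\sum_v\mathcal{E}(u,v)=\mu_{A^{in}}(u)$ and $\sum_u\mathcal{E}(u,v)=\mu_{B^{out}}(v)$. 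This produces
\[\sum_{u,v}f(u)\,\mathcal{E}(u,v)=\sum_{u}f(u)\,\mu_{A^{in}}(u),\qquad \sum_{u,v}f(v)\,\mathcal{E}(u,v)=\sum_{v}f(v)\,\mu_{B^{out}}(v),\]
so that the cost of $\mathcal{E}$ is bounded below by $\sum_u f(u)\mu_{A^{in}}(u)-\sum_v f(v)\mu_{B^{out}}(v)$. Taking the infimum over all couplings $\mathcal{E}$ on the left replaces the cost by $W(\mu_{A^{in}},\mu_{B^{out}})$, and then taking the supremum over all admissible $f$ on the right yields exactly the claimed bound.

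There is no genuine obstacle for the stated inequality; the only point requiring care is tracking the sign and orientation of the Lipschitz constraint so that the transport cost dominates the dual objective and not the reverse. I would close by remarking that this is in fact one half of an identity: the reverse inequality, which would upgrade the statement to full Kantorovich--Rubinstein duality, follows from strong linear-programming duality in this finite setting (the primal transport problem is feasible with bounded optimum), but that direction is not needed for the assertion proved here.
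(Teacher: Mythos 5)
Your proof is correct and follows essentially the same route as the paper's: pair an admissible $f$ against a coupling $\mathcal{E}$, use nonnegativity of $\mathcal{E}$ together with the constraint $f(u)-f(v)\leq d(u,v)$ to bound the transport cost from below, collapse the double sums via the marginal conditions, and then pass to the infimum over couplings and the supremum over $f$. The only cosmetic difference is that the paper outsources the justification to Proposition 2.10 of the cited reference while you write the weak-duality computation out in full; your closing aside on strong duality is extraneous but harmless.
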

\begin{proof}
The proof is similar to the proof of proposition 2.10 in \cite{yamada2016ricci}, which also works for  directed hypergraphs and  some other directions (in-out) for defining measures. Therefore we have :
\begin{align*}
\sum_{u\to A}\sum_{B\to v} d(u,v)\mathcal{E}(u,v)&\geq \sum_{u\to A}\sum_{B\to v} (f(u)-f(v))\mathcal{E}(u,v)\\&=\sum_{u\to A} f(u)\sum_{B\to v} \mathcal{E}(u,v)-\sum_{B\to v}f(v)\sum_{u\to A}\mathcal{E}(u,v)\\
\intertext{and thus according to the projections:}
&=\sum_{u\to A} f(u)\mu_{A^{in}}(u)-\sum_{B\to v}f(v)\mu_{B^{out}}(v)
\end{align*}
and since for all Lipschitz functions on hypergraphs this inequality holds and the left hand side is independent of $f$, we obtain 
\[W(\mu_{A^{in}},\mu_{B^{out}})\geq \sup \left( \sum_{u\to A} f(u)\mu_{A^{in}}(u)-\sum_{B\to v}f(v)\mu_{B^{out}}(v)\right).\]
\end{proof}
\begin{remark}
 Since the distance function on the vertices is not necessarily symmetric, even though the set of Lipschitz functions is always non-empty as it contains constant functions, this supremum might not be achieved.
\end{remark}
Since we use incoming hyperedges to $A$ and outgoing hyperedges from $B$, if $u$ and $v$ are respectively in the support of $\mu_{A^{in}}$ and $\mu_{B^{out}}$ then $d(u,v)\leq 3$. So before giving some bounds for the curvature, we propose another formula for the curvature of a hyperedge which is more intuitive and in some cases much easier to work with.
\\
Let  $\mu_i$  be the amount of  mass that is moved with distance $i (i\leq 3)$ in an optimal transport plans. Then 
\begin{equation}
  \label{mu}
\sum_{i=0}^3 \mu_i=1,\\\sum_{i=1}^3 i\mu_i= W .
\end{equation}
If $\kappa=0$ then $W=1$, we thus  have $\mu_0=\mu_2+2\mu_3$, and so we can define the curvature  of a hyperedge by
\begin{equation}\label{short}
\kappa=\mu_0-\mu_2-2\mu_3.
\end{equation}
 As in  the (undirected) graph case, $\mu_0$ represents the amount of  mass which is not moved in an optimal plan, i.e.,  the amount of the stable mass in directed 3-cycles $(u\to x_i\to y_j\to u)$ or directed loops emerging from  any of the $x_i$s. Although $\mu_1$ (the mass moved with distance one, possibly  through directed quadrangles $(u\to x_i\to y_j\to v, u\to v$)) does not appear in the formula for the curvature, it is an intermediate step for computations of  $\mu_2$  and  $\mu_3$ where  $\mu_2$ is the amount of  mass that should be moved with distance $2$ (possibly through directed pentagons including $x_i$ and $y_j$) and $\mu_3$ is the amount of the mass that is moved with distance $3$ in an optimal plan.
\begin{remark}
While finding the general formula for the computation of  $\mu_1$ (and $\mu_2$) may be difficult, any lower bound for $\mu_1$ (after simply knowing the exact amount of   $\mu_0$) gives us an upper bound for $W$ and therefore a lower bound for the curvature. So, again as in  the graph case we can present upper and lower bounds for the curvature.
\end{remark}
\begin{remark}
The $\mu_i$ can differ between different  optimal transport plans, but \eqref{mu} will always hold. 
\end{remark}
\begin{remark}
The above formula for the curvature also works for edges in undirected graphs.
\end{remark}

\section{Bounds for the curvature}
For an upper bound for the curvature  of a hyperedge we need to control  $\mu_0:$ which corresponds to the stable mass at directed $3$ cycles (triangles in the undirected graph case) and those vertices which are in the intersection of $A$ and $B$  .
\begin{definition}
A directed hyperloop is a directed hyperedge $e:A=\{x_1,\ldots,x_n\}\to B=\{y_1,\ldots, y_m\}$ for which  $ A\cap{B}$ is nonempty.
\end{definition}
In the sequel, we shall see that increasing the number of vertices in this intersection will make the curvature more positive and in the special case where $ A\cap{B}=A=B$, we shall have $\kappa(e)=1$.
\begin{theorem}
For a directed hyperedge $e:A=\{x_1,\ldots,x_n\}\to B=\{y_1,\ldots, y_m\}$  we have
\[\sum_{u\in \mathrm{supp}\ \mu_{A^{in}}(u)\cup \mathrm{supp}\ \mu_{B^{out}} }\mu_{A^{in}}(u)\wedge \mu_{B^{out}}(u)\geq \kappa(e).\]
\end{theorem}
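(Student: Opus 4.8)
The plan is to exhibit a lower bound on $W(\mu_{A^{in}},\mu_{B^{out}})$ that holds for \emph{every} admissible coupling, and then convert it into the desired upper bound on $\kappa(e)=1-W(\mu_{A^{in}},\mu_{B^{out}})$. The guiding intuition is that whatever mass sits at a vertex $u$ common to both measures may be left in place at zero cost, and the common mass at $u$ is exactly $\mu_{A^{in}}(u)\wedge\mu_{B^{out}}(u)$; all remaining mass must travel a directed distance of at least $1$.

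First I would record the two elementary facts about $d$ that are used: $d(u,u)=0$, and $d(u,v)\ge 1$ whenever $u\neq v$, since reaching a distinct vertex requires passing at least one directed hyperedge. No symmetry of $d$ is needed. Using these, for an arbitrary coupling $\mathcal{E}$ between $\mu_{A^{in}}$ and $\mu_{B^{out}}$ I would write
\[
\sum_{u,v}d(u,v)\,\mathcal{E}(u,v)\ \ge\ \sum_{u\neq v}\mathcal{E}(u,v)\ =\ 1-\sum_{u}\mathcal{E}(u,u),
\]
where the final equality uses that $\mathcal{E}$ couples two probability measures and so has total mass $1$.

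Next I would bound the diagonal mass $\sum_u \mathcal{E}(u,u)$ from above via the marginal constraints in the definition of $W$: summing $\mathcal{E}(u,\cdot)$ over the second argument returns $\mu_{A^{in}}(u)$ and summing $\mathcal{E}(\cdot,u)$ over the first returns $\mu_{B^{out}}(u)$. Since the entries of $\mathcal{E}$ are nonnegative, $\mathcal{E}(u,u)\le\mu_{A^{in}}(u)$ and $\mathcal{E}(u,u)\le\mu_{B^{out}}(u)$, hence $\mathcal{E}(u,u)\le \mu_{A^{in}}(u)\wedge\mu_{B^{out}}(u)$. Summing over $u$ and substituting into the previous display gives, for every coupling,
\[
\sum_{u,v}d(u,v)\,\mathcal{E}(u,v)\ \ge\ 1-\sum_{u}\mu_{A^{in}}(u)\wedge\mu_{B^{out}}(u).
\]
Taking the minimum over couplings yields $W(\mu_{A^{in}},\mu_{B^{out}})\ge 1-\sum_u \mu_{A^{in}}(u)\wedge\mu_{B^{out}}(u)$, and rearranging through $\kappa(e)=1-W(\mu_{A^{in}},\mu_{B^{out}})$ gives the claim.

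I do not anticipate a serious obstacle, the argument being a direct consequence of the marginal constraints together with $d\ge 1$ off the diagonal. The only points needing care are the orientation of the two marginal conditions (which of $\mu_{A^{in}},\mu_{B^{out}}$ is the row marginal and which the column marginal) and the remark that the inequality is only strengthened should some $d(u,v)=\infty$. One may also note that the summation index can equivalently be restricted to $\mathrm{supp}\,\mu_{A^{in}}\cup\mathrm{supp}\,\mu_{B^{out}}$, since a vertex outside both supports contributes nothing to either side.
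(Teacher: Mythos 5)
Your proof is correct. The paper itself does not spell out an argument: its proof consists of an appeal to Theorem 7 of the cited Jost--Liu paper on graph curvature, together with the observation that the only vertices contributing non-zero terms to the sum are those lying on a directed $3$-cycle $u\to x_i\to y_j\to u$ or belonging to $A\cap B$ (elsewhere $\mu_{A^{in}}(u)\wedge\mu_{B^{out}}(u)=0$). Your coupling argument supplies exactly the content that the citation hides, and it is the natural self-contained route: off the diagonal $d(u,v)\geq 1$, on the diagonal the marginal constraints and nonnegativity give $\mathcal{E}(u,u)\leq \mu_{A^{in}}(u)\wedge\mu_{B^{out}}(u)$, hence every coupling has cost at least $1-\sum_u \mu_{A^{in}}(u)\wedge\mu_{B^{out}}(u)$, which after minimizing over couplings and rearranging via $\kappa(e)=1-W(\mu_{A^{in}},\mu_{B^{out}})$ is the claim. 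In the paper's own notation your diagonal bound is the statement $\kappa=\mu_0-\mu_2-2\mu_3\leq\mu_0\leq\sum_u \mu_{A^{in}}(u)\wedge\mu_{B^{out}}(u)$, where $\mu_0$ is the unmoved mass of an optimal plan. What the paper's one-line proof adds, and what would round out your write-up, is the combinatorial interpretation of the right-hand side: the nonzero summands sit precisely at vertices of directed $3$-cycles through the hyperedge or at vertices of $A\cap B$, which is what connects this upper bound to local clustering and motivates the theorems on hyperloops that follow it.
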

\begin{proof}
This theorem is similar to Theorem $7$ in  \cite{jost2014ollivier}. We simply notice that the number of non-zero elements in this summation coincides with the number of vertices $u$ belonging to a directed $3$-cycle $(u\to x_i\to y_j\to u)$ or $ A\cap{B}$  . Else  $\mu_{A^{in}}(u)$ or $\mu_{B^{out}}(u)$ is zero. Therefore such $u$'s do not play a role in the  summation. 
\end{proof}
\begin{theorem}
The curvature of a directed hyperloop $e:A=\{x_1,\ldots,x_n\}\to B=\{x_1,\ldots,x_n\}$ is one.
\end{theorem}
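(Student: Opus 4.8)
The plan is to reduce the assertion to the statement that the two measures being compared are equal, so that the optimal plan moves no mass at all. By definition $\kappa(e)=1-W(\mu_{A^{in}},\mu_{B^{out}})$, and since the vertex distance satisfies $d(u,v)=0$ exactly when $u=v$, the transport cost vanishes if and only if $\mu_{A^{in}}=\mu_{B^{out}}$. So it suffices to prove that, when $B=A$, the incoming measure of $A$ and the outgoing measure of $B$ coincide; the diagonal coupling $\mathcal{E}(u,u)=\mu_{A^{in}}(u)$ then has cost $0$ and yields $\kappa(e)=1$.

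To establish the equality I would compute both measures explicitly. Writing $B=A=\{x_1,\dots,x_n\}$ (so $m=n$ and $y_j=x_j$), the decisive feature is that $e$ is at once an incoming hyperedge to every $x_i$ (as $x_i\in B$) and an outgoing hyperedge from every $x_j$ (as $x_j\in A$), with $\#\mathrm{tail}(e)=\#\mathrm{head}(e)=n$. Reading off the definition, the mass that $e$ deposits at $x_i$ is $\sum_{k\neq i}\frac{1}{n^{2}\,d_{x_k^{in}}}$ for the incoming measure and $\sum_{k\neq i}\frac{1}{n^{2}\,d_{x_k^{out}}}$ for the outgoing measure, the sum over $k\neq i$ arising because $x_i$ lies in both the tail and the head of $e$. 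When these in- and out-degrees agree---for instance when $e$ is the sole hyperedge at $A$, so that both equal $1$ and each measure is $\tfrac{n-1}{n^{2}}$ on every vertex of $A$---the two measures are identical. Equivalently, through the short formula \eqref{short}: every $x_k\in A$ lies on the directed $3$-cycle $x_k\to x_i\to x_j\to x_k$ whose steps are all supplied by $e$, so the whole mass is stable, $\mu_0=1$ and $\mu_1=\mu_2=\mu_3=0$, giving $\kappa=\mu_0-\mu_2-2\mu_3=1$. This is the extreme case of the earlier observation that enlarging $A\cap B$ raises the curvature.

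The main obstacle is the degree bookkeeping behind the equality of the two measures. Because $x_i$ belongs to both the tail and the head of $e$, the ``$z=x_i$'' clauses of the definition delete the diagonal self-contribution of $e$ at $x_i$ in each measure, and one must check that this deletion occurs identically on the two sides, introducing no spurious imbalance. What remains is that the reciprocal degree factors $d_{x_k^{in}}$ and $d_{x_k^{out}}$ enter $\mu_{A^{in}}(x_i)$ and $\mu_{B^{out}}(x_i)$ in matching positions; their agreement---automatic for the isolated hyperloop, where $d_{x_k^{in}}=d_{x_k^{out}}=1$, and otherwise resting on the symmetry between the tail and head roles of $e$ forced by $A=B$---is precisely what yields $\mu_{A^{in}}=\mu_{B^{out}}$. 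Carrying this matching out cleanly, rather than settling for the trivial bound $\kappa(e)\le 1$, is where the real content lies.
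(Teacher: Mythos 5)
Your argument is, at its core, the same as the paper's: the paper's entire proof is the observation that every mass coincides with a hole of equal size, so $\mu_0=1$ and $\kappa(e)=1$; your phrasing ``show $\mu_{A^{in}}=\mu_{B^{out}}$ and take the diagonal coupling'' is the identical idea, and your explicit computation for the isolated hyperloop (both measures equal to $\tfrac{n-1}{n^{2}}$ at every vertex of $A$) is correct and more detailed than what the paper offers.

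The step that does not hold up is your claim that, beyond the isolated case, the required agreement of $d_{x_k^{in}}$ and $d_{x_k^{out}}$ ``rests on the symmetry between the tail and head roles of $e$ forced by $A=B$.'' No such symmetry is forced once other hyperedges are present, and it is not only the degree factors that must match: the full contributions of those other hyperedges to the two measures must balance as well. Concretely, adjoin to $e:\{x_1,x_2\}\to\{x_1,x_2\}$ a second hyperedge $f:\{z\}\to\{x_1\}$. Then $d_{x_1^{in}}=2$ while $d_{x_1^{out}}=1$, and the definition gives $\mu_{A^{in}}(x_1)=\tfrac{1}{4}$, $\mu_{A^{in}}(x_2)=\tfrac{1}{8}$, $\mu_{A^{in}}(z)=\tfrac{1}{4}$, whereas $\mu_{B^{out}}(x_1)=\mu_{B^{out}}(x_2)=\tfrac{1}{4}$ and $\mu_{B^{out}}(z)=0$: the measures disagree, and they do not even carry equal total mass, so no coupling (let alone a zero-cost one) exists. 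Hence your argument genuinely proves the statement only when $e$ is the sole hyperedge incident to the vertices of $A$. To be fair, the paper's own one-line proof rests on exactly the same unstated assumption that all masses sit on holes of the same size, so the gap you would need to close is one the paper leaves open as well; what both proofs cleanly establish is that an isolated hyperloop has curvature one.
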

\begin{proof}
Since in this case all the masses are coincided with all the holes with the same size no mass need to be moved ($\mu_0=1$) and therefore the curvature is one.
\end{proof}
\begin{remark}
In undirected graphs, it has been proven that the local clustering coefficient can control the scalar curvature of any vertex which by definition is obtained by averaging over the Ricci curvature of all the edges connecting to that vertex (see Corollary 1\cite{jost2014ollivier}). Here after fixing the direction of every hyperedge, we encounter  4 different types of triangles which share the property of having a directed edge which goes out from $A$ and comes into the set $B$. Therefore in contrast to the undirected graph case, not all types of directed triangles but only the  presence of directed 3-cycles constituted of vertices of $A$ and  $B$ and those vertices $u$ where $ u\to x_i\to y_j\to u , u\in A^{in}(u\to A)$ and $u\in B^{out}(B\to u)$  increases the curvature of the corresponding hyperedge since they directly affect  $\mu_0$. Also those directed triangles which are constituted by  $A$ and  $B$ and $u$ in such a way that $  x_i\to u,  x_i\to y_j\to u , u\in A^{out}(A\to u)$  and  $u\in B^{out}(B\to u)$ and the ones constituted by  $A$ and  $B$  and $u$ such that $ u\to x_i\to y_j, u\to y_j , u\in A^{in}(u\to A) , u\in B^{in}(u\to B)$ might have impact on the amount of $\mu_2$ which means they can make the curvature less negative. The last type of directed triangles which include vertices of $A$ and $B$, outgoing  vertices of $A$ and incoming vertices to the set $B$ do not affect any of the $\mu_i$s and therefore they do not affect the curvature. For instance in the following directed graph, the triangle constituted of red and green edges affects $\mu_0$ regarding the curvature of the green edge. Both triangles including orange-green and blue-green edges have an effect on $\mu_2$ and the  curvature is not affected by  the presence of the triangle of pink-green edges .
\begin{center}
\vspace{-0.01 cm} \includegraphics[width=5 cm]{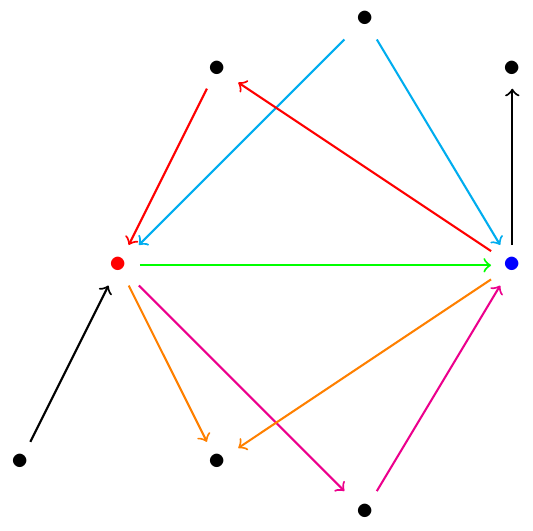}
\end{center}

\end{remark}
\begin{remark}
Directed cycles which have more than 3 connected edges do not affect the curvature of directed hyperedges since they can not make short-cuts for moving  any of the masses to any of the holes 
\end{remark}

 As  already mentioned after computing $\mu_0$, any  non-zero amount for $\mu_1$ would give us an upper bound for $W$. For that,   at least one  incoming neighbour of $A$ should be at  distance one from some outgoing neighbour of $B$. For example, when for some hyperedge, there is at least one hyperedge $e$ from any  $y_j$ to any  $x_i$ $(e:y_j\to x_i)$ or/ and when there is  at least one $x_i$ with $d_{x_i^{in}}=0$ and at least one $y_j$ with $d_{y_j^{out}}=0$, this condition holds and we can present a transfer plan (similar to that in Theorem 3 in \cite{jost2014ollivier}) to obtain a positive lower bound for $\mu_1$. In the same way that trees reach the smallest possible amount of curvature in undirected graphs, here hyperedges in directed hypertrees get the lowest possible number.
 \begin{definition}
 
 A directed loopless hypergraph is a hypertree if 
 \begin{itemize}
 \item[i)] there is  at most one directed path between any two vertices and
 \item[ii)] there is no directed cycle.
\end{itemize}
\end{definition} 
\begin{remark} 
Although these two conditions are equivalent in undirected (hyper)graphs, they do not coincide in the directed case.
 \end{remark}

\begin{theorem}
Let $\{x_1,\ldots,x_n\}\xrightarrow{e} \{y_1,\ldots, y_m\}$ be a hyperedge in a hypertree. 
 
 If $\displaystyle \left\lbrace \begin{array}{l}
\#\{  x_i, 1\leq i\leq n:d_{x_i^{in}}= 0\}=k\\\#\{ y_j, 1\leq j\leq m: d_{y_j^{out}}=0\}=k^\prime\end{array} \right\rbrace \text{ then }{}\kappa(e)=-2+\frac{k}{n}+\frac{k^\prime}{m}.$
\end{theorem}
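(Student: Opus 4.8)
The plan is to show that on a hypertree the ground distance between any mass point and any hole point splits additively, which collapses the transport problem to a quantity that does not depend on the coupling at all. First I would make the two measures explicit. The measure $\mu_{A^{in}}$ is carried by two kinds of vertices: the $k$ sources $x_i$ with $d_{x_i^{in}}=0$, each holding mass $\frac1n$ at $x_i$ itself, for a total of $\frac kn$; and the incoming neighbours of the remaining $n-k$ vertices $x_i$, carrying together the remaining mass $\frac{n-k}{n}$. Symmetrically, $\mu_{B^{out}}$ sits on the $k'$ sinks $y_j$ with $d_{y_j^{out}}=0$ (total mass $\frac{k'}{m}$) and on the outgoing neighbours of the other $m-k'$ vertices $y_j$ (total mass $\frac{m-k'}{m}$).

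Second, and this is the heart of the argument, I would establish the additive distance identity $d(u,v)=\rho(u)+\sigma(v)$ for every $u\in\mathrm{supp}\,\mu_{A^{in}}$ and $v\in\mathrm{supp}\,\mu_{B^{out}}$, where $\rho(u)=0$ when $u$ is a source $x_i$ and $\rho(u)=1$ when $u$ is an incoming neighbour, while $\sigma(v)=1$ when $v$ is a sink $y_j$ and $\sigma(v)=2$ when $v$ is an outgoing neighbour. The path $u\to x_i\to y_j\to v$ (using the hyperedge $e$ for the middle step) always realises this length, so the only thing to prove is that no shorter route exists. This is exactly where the two hypertree hypotheses are used: a strictly shorter directed path from $u$ to $v$ would, together with the path above, either create a directed cycle or furnish a second directed path between $u$ and $v$, contradicting acyclicity or the uniqueness of directed paths. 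The same acyclicity also forces $\mathrm{supp}\,\mu_{A^{in}}$ and $\mathrm{supp}\,\mu_{B^{out}}$ to be disjoint, since a shared vertex would close a $2$- or $3$-cycle through $e$; in particular $\mu_0=0$ and all distances appearing are at least $1$.

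Third I would cash this in. Once $d(u,v)=\rho(u)+\sigma(v)$ on the supports, every coupling $\mathcal{E}\in\Pi(\mu_{A^{in}},\mu_{B^{out}})$ has the identical cost
\[\sum_{u,v} d(u,v)\,\mathcal{E}(u,v)=\sum_{u}\rho(u)\mu_{A^{in}}(u)+\sum_{v}\sigma(v)\mu_{B^{out}}(v),\]
because summing each marginal out of $\mathcal{E}$ leaves precisely these two one-sided sums. Plugging in the masses from the first step gives $\sum_u\rho(u)\mu_{A^{in}}(u)=\frac{n-k}{n}$ and $\sum_v\sigma(v)\mu_{B^{out}}(v)=\frac{k'}{m}+2\cdot\frac{m-k'}{m}=2-\frac{k'}{m}$, hence $W(\mu_{A^{in}},\mu_{B^{out}})=3-\frac kn-\frac{k'}{m}$ and $\kappa(e)=1-W=-2+\frac kn+\frac{k'}{m}$.

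The main obstacle is the second step, namely certifying the exact distances by excluding shortcuts; this is the only place any genuine content lives. The additive structure of the cost then makes the transport problem degenerate, so that every transport plan is optimal and no real optimisation is required. This also explains why the answer depends only on the counts $k$ and $k'$ and not on the finer combinatorics of the neighbourhoods of $A$ and $B$.
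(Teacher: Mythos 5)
Your proof is correct, but it takes a genuinely different route from the paper's. The paper sandwiches $W$ between two matching bounds: an explicit transfer plan (move $\frac{k}{n}\wedge\frac{k'}{m}$ of the mass at distance $1$, the imbalance $|\frac{k}{n}-\frac{k'}{m}|$ at distance $2$, the rest at distance $3$) gives $W\le 3-\frac{k}{n}-\frac{k'}{m}$, and the duality inequality of Theorem 3.5, applied to the function $f$ equal to $3$ on incoming neighbours of $A$, $2$ on $A$, $1$ on $B$ and $0$ elsewhere, gives the matching lower bound. Your separability identity $d(u,v)=\rho(u)+\sigma(v)$ replaces both halves at once: it forces the cost of \emph{every} coupling to equal $\sum_u\rho(u)\mu_{A^{in}}(u)+\sum_v\sigma(v)\mu_{B^{out}}(v)=3-\frac{k}{n}-\frac{k'}{m}$, so no transport plan needs to be exhibited and no duality invoked, and you even get the stronger conclusion that all couplings are optimal. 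The two arguments share their only real content: your ``no shortcut'' verification via the two hypertree axioms is exactly what the paper needs (but only asserts as ``easy to check'') to certify that its $f$ is Lipschitz; indeed $f(u)-f(v)=\rho(u)+\sigma(v)$ on the supports, so the paper's dual certificate secretly encodes your additive decomposition, while your approach buys degeneracy of the whole transport problem and the paper's primal--dual scheme is the one that generalizes when the cost is not separable. One small point you should make explicit: $\rho$ and $\sigma$ must be well defined, i.e.\ a vertex cannot simultaneously be a source $x_{i'}$ and an incoming neighbour of some $x_i$ (nor a sink $y_{j'}$ and an outgoing neighbour of some $y_j$); this is excluded by the same two-paths argument you already use, since such a vertex would admit both a one-hyperedge path (through $e$ itself) and a two-hyperedge path to any $y_j$, violating uniqueness of directed paths.
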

\begin{proof}
First, since $e$ is in a hypertree, according to the definition $\mu_0=0$. So $\kappa(e)\leq 0$. We shall propose a transfer plan, which gives us an upper bound for $W$, and we shall obtain a lower bound for $W$ based on a single Lipschitz function (defined on the support of $\mu_{A^{in}}$ and $\mu_{B^{out}}$). We shall see that these two bounds  coincide.

 We move $(\frac{k}{n}\wedge\frac{k^\prime}{m})$ of the mass from $k$ $x_i$'s to $k^\prime$ $y_j$'s with distance one. Then if $\frac{k}{n}\geq \frac{k^\prime}{m}$ we move $\frac{k}{n}-\frac{k^\prime}{m}$ of the mass from $x_i$'s with no incoming hyperedges to outgoing neighbours of the $y_j$'s with distance $2$ and if $ \frac{k^\prime}{m}>\frac{k}{n}$ we move $\frac{k^\prime}{m}-\frac{k}{n}$ of the mass at incoming neighbors of $x_i$'s to those $y_j$'s with no outgoing hyperedges with distance two. Then we move the remaining part of the mass with cost $3$. So $W\leq 3-\frac{k}{n}-\frac{k^\prime}{m}$. 

On the other hand, for all $z$ in $V(H)$ we define
\[f(z)=\begin{dcases}
3& \exists 1\leq i\leq n, \exists e:z\to x_i\\
2& \exists 1\leq i\leq n, z=x_i\\
1& \exists 1\leq j\leq m, z=y_j\\
0& \text{otherwise}
\end{dcases}\]
It is easy to check that $f$ is a Lipschitz function on $A\cup B\cup \mathrm{supp} \mu_{A^{in}}\cup\mathrm{supp} \mu_{B^{out}}$, So according to the theorem $3.5$, we have
\begin{align*}
W(\mu_{A^{in}}, \mu_{B^{out}})&\geq \sup \sum_{z\to A}\mu_{A^{in}}(z)f(z)-\sum_{B\to z^\prime}\mu_{B^{out}}(z^\prime)f(z^\prime)\\&\geq 3\left( 1- \frac{k}{n}\right)+2\left(\frac{k}{n}\right)-1\times \frac{k^\prime}{m}-0\times\left(1-\frac{k^\prime}{m}\right)\\&=3-\frac{k}{n}-\frac{k^\prime}{m}.
\end{align*}
So $\displaystyle\kappa(e)=-2+\frac{k}{n}+\frac{k^\prime}{m}$.
\end{proof}

\begin{theorem}
If for all $1\leq i\leq n$, $1\leq j\leq m$, $d_{x_i^{in}}\neq 0$ and $d_{y_j^{out}}\neq 0$ and there is a bijective map $g:\mathrm{supp}\ \mu_{A^{in}}\to \mathrm{supp}\ \mu_{B^{out}}$ such that $g(z)=z^\prime $ and $d(z,z^\prime)=1$ and $\mu_{A^{in}}(z)=\mu_{B^{out}}(z^\prime)$ then $\kappa(e)=0$.
\end{theorem}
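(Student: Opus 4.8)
The plan is to show that $W(\mu_{A^{in}},\mu_{B^{out}})=1$, since then $\kappa(e)=1-W=0$ by definition. I would bound $W$ both from above and from below by $1$ and match the two bounds. A preliminary observation I would record first: because every $d_{x_i^{in}}\neq 0$ and every $d_{y_j^{out}}\neq 0$, the first two cases in the definitions of $\mu_{x_i}$ and $\mu_{y_j}$ never place mass on the vertices of $A$ or $B$ themselves, so $\mathrm{supp}\,\mu_{A^{in}}$ consists only of proper incoming neighbours of $A$ and $\mathrm{supp}\,\mu_{B^{out}}$ only of proper outgoing neighbours of $B$.

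For the upper bound I would use the bijection $g$ directly to build a transport plan. Define
\[\mathcal{E}(u,v):=\begin{cases}\mu_{A^{in}}(u)&v=g(u)\\0&\text{otherwise.}\end{cases}\]
Summing over $v$ recovers $\mu_{A^{in}}(u)$, and summing over $u$ while using that $g$ is a bijection with $\mu_{A^{in}}(u)=\mu_{B^{out}}(g(u))$ recovers $\mu_{B^{out}}(v)$; hence $\mathcal{E}$ is an admissible coupling. Its cost is $\sum_u d(u,g(u))\,\mu_{A^{in}}(u)=\sum_u \mu_{A^{in}}(u)=1$, since every matched distance equals $1$ and $\mu_{A^{in}}$ is a probability measure. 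Therefore $W\le 1$, i.e. $\kappa(e)\ge 0$.

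For the lower bound I would invoke the duality of Theorem~3.5 with the indicator $f=\mathbf{1}_{\mathrm{supp}\,\mu_{A^{in}}}$. Since $f$ takes values in $\{0,1\}$, the only constraint $f(u)-f(v)\le d(u,v)$ that is not automatic occurs when $f(u)=1$ and $f(v)=0$; but then $u\neq v$, and the directed distance between distinct vertices is at least $1$, so $f$ is Lipschitz. Its dual value is
\[\sum_u f(u)\,\mu_{A^{in}}(u)-\sum_v f(v)\,\mu_{B^{out}}(v)=1-\mu_{B^{out}}\bigl(\mathrm{supp}\,\mu_{A^{in}}\bigr),\]
so once I know that $\mathrm{supp}\,\mu_{A^{in}}$ and $\mathrm{supp}\,\mu_{B^{out}}$ are disjoint this equals $1$ and gives $W\ge 1$. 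Combined with the upper bound this yields $W=1$ and $\kappa(e)=0$.

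The main obstacle is precisely this disjointness of the two supports, equivalently $\mu_0=0$. A vertex $u$ lying in both supports would satisfy $u\to x_i$ and $y_j\to u$ for some $i,j$, hence sit on a directed $3$-cycle $u\to x_i\to y_j\to u$; such stable mass would let an optimal plan transport part of the mass at zero cost, strictly lowering $W$ below $1$ and pushing $\kappa(e)$ above $0$. So the hypotheses must be used exactly to exclude these directed $3$-cycles through $A$ and $B$, and I would devote the bulk of the argument to verifying that the stated conditions force $\mathrm{supp}\,\mu_{A^{in}}\cap\mathrm{supp}\,\mu_{B^{out}}=\emptyset$; under that disjointness the matched distance-$1$ plan is optimal and the two bounds coincide. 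This is the only place where the result is delicate, since the existence of the distance-$1$ bijection $g$ by itself secures only the upper bound.
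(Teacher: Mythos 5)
Your two bounds reproduce the paper's own argument almost exactly: the coupling $\mathcal{E}(u,g(u))=\mu_{A^{in}}(u)$ of cost $1$ is precisely the paper's transfer plan in which every hole at $z'=g(z)$ is filled by the mass at $z$, giving $W\le 1$, and your dual lower bound with the indicator $f=\mathbf{1}_{\mathrm{supp}\,\mu_{A^{in}}}$ is a formalized version of the paper's direct remark that once no mass coincides with a hole, every unit of mass must travel distance at least one, whence $W\ge 1$. Up to that point the routes differ only cosmetically (the paper does not invoke duality).

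The genuine gap is the step you deferred, and it cannot be filled in the way you plan: the stated hypotheses do \emph{not} force $\mathrm{supp}\,\mu_{A^{in}}\cap\mathrm{supp}\,\mu_{B^{out}}=\emptyset$. Concretely, take the directed graph (so $n=m=1$) with vertices $x,y,u,w,v$ and edges $u\to x$, $w\to x$, $e:x\to y$, $y\to u$, $y\to v$, $u\to v$, $w\to u$, $w\to v$. Then $d_{x^{in}}=d_{y^{out}}=2\neq 0$, $\mu_{A^{in}}$ puts mass $\tfrac12$ on each of $u,w$, $\mu_{B^{out}}$ puts mass $\tfrac12$ on each of $u,v$, and $g(u)=v$, $g(w)=u$ is a bijection with $d(z,g(z))=1$ and matching masses, so every stated hypothesis holds; yet the coupling that leaves the mass at $u$ in place and moves the mass at $w$ to $v$ (distance $1$) costs $\tfrac12$, so $W=\tfrac12$ and $\kappa(e)=\tfrac12\neq 0$. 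Thus the disjointness you identified as the crux is not a consequence of the hypotheses but must be an \emph{additional} hypothesis (absence of directed $3$-cycles $u\to x_i\to y_j\to u$, i.e.\ $\mu_0=0$). That is in effect what the paper does: its proof simply opens with ``By assumption, $\mu_0=0$,'' which does not follow from $d_{x_i^{in}}\neq 0$ and $d_{y_j^{out}}\neq 0$ alone --- those conditions only exclude mass sitting at the $x_i$ and holes at the $y_j$, not the $3$-cycle overlaps. So your instinct about where the delicacy lies is exactly right, but the correct resolution is to assume the disjointness (as the paper tacitly does), not to derive it; with that extra assumption your argument is complete and coincides with the paper's.
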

\begin{proof}
By assumption, $\mu_0=0$ and the whole mass in any transport plan has to  be moved with distance at least one. Also we know that the bigger $\mu_1$, the lower the cost of the transport between $\mu_{A^{in}}$ and $\mu_{B^{out}}$. But according to the assumption there is a direct (length $1$) path between every pair $(z,z^\prime)$ and it's corresponding hole at $z^\prime$ can be filled with mass at $z$ (no further mass remains at $z$). So $\mu_1=1$ and therefore $W=1$ and $\kappa=0$
\end{proof}
\vspace{1 cm}
\begin{remark}
Neither of the two assumptions $\forall 1\leq i\leq n \ d_{x_i^{in}}\neq 0$ and $\forall 1\leq j\leq m \ d_{y_j^{out}}\neq 0$ and bijectivity of $g$ is  necessary  to have $\kappa=0$. In such cases a more subtle transfer plan is needed. For instance in the following hypergraphs, the red hyperedge has curvature zero.
\begin{center}
\vspace{-0.01 cm} \includegraphics[width=4 cm]{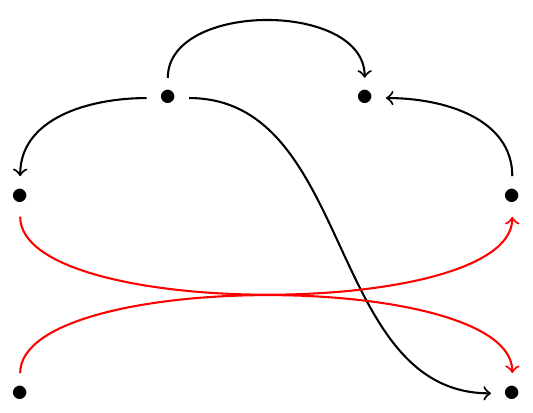}
\end{center}
\end{remark}
\begin{remark}
With the same assumptions as in the previous theorem, but changing the assumption $d(z,z^\prime)=1$  to $d(z,z^\prime)=2$ and assuming there is no directed quadrangle(as before), the curvature will become negative ($-1$). Since in this case the distance between any mass and it's corresponding  hole is $2$ and any hole can not be filled with a mass in which is in lower distance from it.
\end{remark}

\section{Extension and reduction of hyperedges}
\subsection{Removing vertices (Reduction)}

We now want to investigate what happens to the curvature of an edge $e:A=\{x_1,\ldots,x_n\}\to B=\{y_1,\ldots, y_m\}$ if we remove a number $(l,l^\prime)$  of vertices from $A$ $(l\leq n)$ and/or from $B$ $(l^\prime\leq m)$.
Although  curvature  depends on the connections between elements of $\mathrm{supp}\  \mu_{A^{in}}$ and $\mathrm{supp}\ \mu_{B^{out}}$ and removing different vertices from $A(B)$ might have different effects on the curvature, since the amount of  the masses (size of holes) which is assigned to any $x_i(y_j)$ is already determined and is equal to $1/n(1/m)$,  we can give a bound for such changes.

\begin{theorem}
Let $e:A=\{x_1,\ldots,x_n\}\to B=\{y_1,\ldots, y_m\}$. By removing a vertex $x_i$ from $A$ we get $e^\prime: A-\{x_i\}\to B=\{y_1,\ldots, y_m\}$ and we have
\[ |\kappa(e^\prime)-\kappa(e)|\leq \frac{3}{n}.\]
Similarly, by removing $l$ vertices from $A$ $(l<n)$ we have
\[|\kappa(e^\prime)-\kappa(e)|\leq \frac{3l}{n}\]
\end{theorem}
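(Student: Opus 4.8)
The plan is to work directly with the Wasserstein distances $W := W(\mu_{A^{in}}, \mu_{B^{out}})$ and $W' := W(\mu_{A'^{in}}, \mu_{B^{out}})$, since $\kappa(e') - \kappa(e) = W - W'$, and to establish the two-sided estimate $-\tfrac{3l}{n} \le W - W' \le \tfrac{3l}{n}$. Two preliminary observations drive everything. First, removing vertices from $A$ does not alter the right-hand measure, so $\mu_{B^{out}}$ is common to both transport problems. Second, I will use the fact recorded just before the theorem statement, namely that whenever $u \in \mathrm{supp}\,\mu_{A^{in}}$ and $v \in \mathrm{supp}\,\mu_{B^{out}}$ one has $d(u,v) \le 3$; this is precisely what will make any "repair" transport routed from the $A$-side support to the $B$-side support cost at most $3$ per unit of mass.

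Next I set up the bookkeeping. Writing $A' = A \setminus\{x_{i_1},\dots,x_{i_l}\}$, let $\sigma = \sum_{x_k \in A'} \mu_{x_k}$ and $\tau = \mu_{A^{in}} - \sigma = \sum_{j} \mu_{x_{i_j}}$, so that $\sigma$ has mass $\tfrac{n-l}{n}$ and $\tau$ has mass $\tfrac{l}{n}$. Since in the reduced edge each surviving $\mu'_{x_k}$ carries total mass $\tfrac{1}{n-l}$ rather than $\tfrac1n$, one gets cleanly $\mu_{A'^{in}} = \tfrac{n}{n-l}\,\sigma$. For the bound $W' \le W + \tfrac{3l}{n}$, I would take an optimal plan $\mathcal{E}$ for $(\mu_{A^{in}},\mu_{B^{out}})$ and split it proportionally at each source point as $\mathcal{E} = \mathcal{E}_\sigma + \mathcal{E}_\tau$, with source marginals $\sigma,\tau$ and target marginals $\nu_\sigma,\nu_\tau$ summing to $\mu_{B^{out}}$. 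Because $\mu_{A'^{in}} = \sigma + \tfrac{l}{n-l}\sigma$, the plan $\mathcal{E}_\sigma$ augmented by any coupling of $\tfrac{l}{n-l}\sigma$ (mass $\tfrac{l}{n}$, on the $A$-side support) to $\nu_\tau$ (mass $\tfrac{l}{n}$, on the $B$-side support) is admissible for $(\mu_{A'^{in}}, \mu_{B^{out}})$, and its cost is at most $\mathrm{cost}(\mathcal{E}_\sigma) + 3\cdot\tfrac{l}{n} \le W + \tfrac{3l}{n}$.

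For the reverse direction I would start instead from an optimal plan $\mathcal{E}'$ for the reduced edge and use $\mu_{A^{in}} = \tfrac{n-l}{n}\mu_{A'^{in}} + \tau$: the rescaled plan $\tfrac{n-l}{n}\mathcal{E}'$, together with a coupling of $\tau$ (mass $\tfrac{l}{n}$) to $\tfrac{l}{n}\mu_{B^{out}}$ — again from the $A$-side support to the $B$-side support, hence over distance $\le 3$ — is admissible for $(\mu_{A^{in}},\mu_{B^{out}})$ and costs at most $\tfrac{n-l}{n}W' + \tfrac{3l}{n}$. This gives $W - W' \le -\tfrac{l}{n}W' + \tfrac{3l}{n} \le \tfrac{3l}{n}$ using $W' \ge 0$. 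Combining the two directions yields $|\kappa(e')-\kappa(e)| = |W-W'| \le \tfrac{3l}{n}$, with the single-vertex claim recovered as the case $l=1$.

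The main obstacle I anticipate is getting the constant right for the multi-vertex statement: a naive induction removing one vertex at a time only produces $\sum_{k=0}^{l-1} \tfrac{3}{n-k}$, which strictly exceeds $\tfrac{3l}{n}$, so the estimate must be proved in a single step as above. The two ingredients that deserve care are the proportional splitting of the optimal plan — needed both to identify the defect target mass $\nu_\tau$ and to guarantee $\mathrm{cost}(\mathcal{E}_\sigma)\le W$ — and the slightly asymmetric rescaling in the reverse direction, where the factor $\tfrac{n-l}{n}$ in front of $W'$ is exactly what absorbs the surplus and leaves the clean bound $\tfrac{3l}{n}$.
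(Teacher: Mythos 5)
Your proof is correct, and it takes a genuinely different route from the paper's. The paper never touches the transport plans: it invokes its reformulation $\kappa=\mu_0-\mu_2-2\mu_3$ and brackets the change in curvature between two ``extreme scenarios,'' in which the $\tfrac{1}{n}$ share of mass belonging to the removed vertex passes from being stationary (counted in $\mu_0$) to being moved at distance $3$ (counted in $\mu_3$), or vice versa; each scenario shifts $\kappa$ by $\mp\tfrac{3}{n}$, giving $\kappa(e)-\tfrac{3}{n}\le\kappa(e')\le\kappa(e)+\tfrac{3}{n}$, and the $l$-vertex case is dispatched with ``the same argument works.'' That argument is short and indicates which configurations should saturate the bound, but it is heuristic: it does not explain why every intermediate redistribution is controlled by the two extremes, and its bookkeeping silently ignores that the surviving masses are renormalized from $\tfrac{1}{n}$ to $\tfrac{1}{n-l}$. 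Your plan-surgery argument supplies exactly what is missing: the identity $\mu_{{A'}^{in}}=\tfrac{n}{n-l}\sigma$ makes the renormalization explicit; the proportional splitting $\mathcal{E}=\mathcal{E}_\sigma+\mathcal{E}_\tau$ together with rerouting the defect mass $\tfrac{l}{n}$ at cost at most $3$ gives $W'\le W+\tfrac{3l}{n}$; and the rescaled reuse of $\mathcal{E}'$ gives $W\le\tfrac{n-l}{n}W'+\tfrac{3l}{n}$, hence the two-sided bound with the correct constant obtained in a single step. Your remark that vertex-by-vertex induction only yields $\sum_{k=0}^{l-1}\tfrac{3}{n-k}$ is exactly right, and it exposes a real weakness in the paper's ``similarly'' for the multi-vertex claim. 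One caveat you share with the paper: both arguments tacitly assume that passing from $e$ to $e'$ alters neither $\mu_{B^{out}}$ nor the distance function $d$ (this is unproblematic if $e'$ is compared against the same ambient hypergraph and $A\cap B=\emptyset$, but if $e'$ literally replaces $e$, deleting tail vertices can increase some distances, which would matter in the direction where you reuse $\mathcal{E}_\sigma$ inside the reduced problem). Under the same reading the paper itself adopts, your proof is complete and strictly more rigorous than the one given.
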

\begin{proof}
The two bounds for the curvature of $e^\prime$ arise from two extreme scenarios which might happen by removing vertex $x_i$ (or $l$ vertices) from $A$.
\begin{enumerate}
\item If the whole mass around $x_i$ (or the $x_i$'s) is in directed loops or directed $3$-cycles including $x_i$ and any of $y_j$'s, and after removing it, its corresponding mass has to be moved with distance $3$ in an optimal plan, then 
\[\begin{dcases}
\kappa(e)=\mu_0-\mu_2-2\mu_3\\\kappa(e^\prime)=\left(\mu_0-\frac{1}{n}\right)-\mu_2-2\left(\mu_3+\frac{1}{n}\right).
\end{dcases}\]
\item If the whole mass around $x_i$ (or the $x_i$'s) was transported with distance $3$ in an  optimal plan and after removing it, the corresponding mass is in the place of directed loops or directed $3$-cycles including vertices of $A-\{x_i\}$ and $B$, then 
\[\kappa(e^\prime)=\left(\mu_0+\frac{1}{n}\right)-\mu_2-2\left(\mu_3-\frac{1}{n}\right)=\kappa(e)+\frac{3}{n}.\]
\end{enumerate}
Therefore we have 
\[\kappa(e)+\frac{3}{n}\geq \kappa(e^\prime)\geq \kappa(e)-\frac{3}{n}.\]
The same argument works for removing $l$ vertices from $A$ and the proof is complete.
\end{proof}
\begin{theorem}
Let  $e:A=\{x_1,\ldots,x_n\}\to B=\{y_1,\ldots, y_m\}$. By removing a vertex $y_j$ from $B$ we get $e^\prime: A\to B-\{y_j\}$ and we have
\[|\kappa(e^\prime)-\kappa(e)|\leq \frac{3}{m}.\]
Analogously, by removing $l^\prime$ vertices from $B$ $(l^\prime<m)$ we have
\[|\kappa(e^\prime)-\kappa(e)|\leq \frac{3l^\prime}{m}.\]
\end{theorem}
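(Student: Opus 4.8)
The plan is to run the argument of the preceding theorem essentially verbatim, interchanging the roles of the tail set $A$ and the head set $B$, hence of $n$ and $m$. First I would record the one structural fact that makes the analogy work: in $\mu_{B^{out}}=\sum_{j=1}^m\mu_{y_j}$ each summand $\mu_{y_j}$ carries total mass exactly $1/m$, so deleting a single head vertex $y_j$ withdraws precisely a mass $1/m$ of the holes, just as deleting an $x_i$ withdrew a mass $1/n$ of the sources. Because the curvature is written through \eqref{short} as $\kappa(e)=\mu_0-\mu_2-2\mu_3$, it then suffices to bound how far the quantities $\mu_0,\mu_2,\mu_3$ of an optimal plan can move when this mass is removed.

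Next I would isolate the two extreme scenarios exactly as in the tail case. In the first, the mass $1/m$ attached to $y_j$ was entirely stable, i.e.\ it sat in directed loops or in directed $3$-cycles $u\to x_i\to y_j\to u$ and so contributed to $\mu_0$; after deleting $y_j$ the corresponding mass can only be transported with distance $3$, so $\mu_0$ drops by $1/m$ while $\mu_3$ rises by $1/m$, giving $\kappa(e')=\kappa(e)-3/m$. In the opposite scenario the mass was being moved with distance $3$ and, once $y_j$ is removed, it falls into the stable position of a loop or $3$-cycle among the vertices of $A$ and $B-\{y_j\}$, giving $\kappa(e')=\kappa(e)+3/m$. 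Every intermediate redistribution changes $\kappa$ by an amount lying between these two values, whence $|\kappa(e')-\kappa(e)|\le 3/m$. For the removal of $l'$ vertices I would either iterate this single-vertex estimate $l'$ times, or observe directly that the total affected hole mass is $l'/m$ and that, in $\kappa=\mu_0-\mu_2-2\mu_3$, each unit of mass can alter its contribution by at most $3$ (passing from distance $0$ to distance $3$ or conversely), which yields $|\kappa(e')-\kappa(e)|\le 3l'/m$.

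The step I expect to require the most care is the in/out asymmetry. Since $\mu_{B^{out}}$ is built from \emph{outgoing} hyperedges of the $y_j$, I must confirm that the stable configurations relevant to $\mu_0$ on the head side are the cycles $u\to x_i\to y_j\to u$ (with $u$ lying in both $\mathrm{supp}\,\mu_{A^{in}}$ and $\mathrm{supp}\,\mu_{B^{out}}$), rather than their reverses, and that it is exactly the loss of these that forces distance-$3$ transport after the deletion. A secondary subtlety, implicit already in the tail case, is that after deleting $y_j$ the surviving holes are re-weighted from $1/m$ to $1/(m-1)$; I would need to check that this rescaling is absorbed into the same extremal bookkeeping and does not enlarge the bound. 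Granting these two checks, the estimate is symmetric to the tail case and both inequalities follow.
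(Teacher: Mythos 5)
Your proposal is correct and follows essentially the same route as the paper: the paper's own proof simply says the argument is the same as for removing tail vertices, with holes of size $1/m$ now filled by masses at distance zero or $3$, which is exactly the two-extreme-scenario bookkeeping via $\kappa=\mu_0-\mu_2-2\mu_3$ that you spell out. Your added remarks on the in/out asymmetry and the $1/m\to 1/(m-1)$ re-weighting are reasonable refinements of details the paper leaves implicit, but they do not change the structure of the argument.
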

\begin{proof}
The argument is  similar to the preceding, the only difference being that we want to fill the corresponding holes with  masses which are  at distance zero or $3$ from them.
\end{proof}
\begin{corollary}
By removing $l$ vertices from the set $A$ and $l^\prime$ vertices from $B$ $(e:A\to B)$ the following relation holds between the curvature of the resulting hyperedge $(e^\prime)$ and the old one:
\[|\kappa(e^\prime)-\kappa(e)|\leq 3\left(\frac{l}{n}+ \frac{l^\prime}{m}\right)\wedge 3 \]
\end{corollary}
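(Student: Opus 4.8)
The plan is to reduce the corollary to the two immediately preceding reduction theorems by removing the chosen vertices one block at a time and invoking the triangle inequality, and then to obtain the $\wedge\,3$ term separately from the a priori range of the curvature.

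First I would introduce an intermediate hyperedge interpolating between $e$ and $e'$. Starting from $e:A\to B$, remove the $l$ chosen tail vertices to obtain $e'':A'\to B$ with $A'=A-\{\text{those } l \text{ vertices}\}$; then, from $e''$, remove the $l'$ chosen head vertices to obtain the final $e':A'\to B'$. By the reduction theorem for the tail, $|\kappa(e'')-\kappa(e)|\le 3l/n$. Since $e''$ still has $|B|=m$ vertices in its head, the reduction theorem for the head applies verbatim to the step $e''\to e'$ and yields $|\kappa(e')-\kappa(e'')|\le 3l'/m$. The triangle inequality then gives
\[|\kappa(e')-\kappa(e)|\le |\kappa(e')-\kappa(e'')|+|\kappa(e'')-\kappa(e)|\le \frac{3l'}{m}+\frac{3l}{n}=3\Big(\frac{l}{n}+\frac{l'}{m}\Big).\]

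For the alternative bound I would use the fact that the curvature always lies in a fixed interval. Because the supports of $\mu_{A^{in}}$ and $\mu_{B^{out}}$ lie within distance $3$ of one another, every feasible transport plan moves mass over distance at most $3$, so $0\le W\le 3$ and hence $\kappa=1-W\in[-2,1]$ for any hyperedge whatsoever. Consequently $|\kappa(e')-\kappa(e)|\le 1-(-2)=3$ no matter how many vertices are removed. Taking the smaller of the two estimates gives $|\kappa(e')-\kappa(e)|\le 3\big(\frac{l}{n}+\frac{l'}{m}\big)\wedge 3$, as claimed.

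The only point requiring care is the legitimacy of applying the head-reduction theorem to the intermediate edge $e''$ rather than to the original $e$; this is justified because the stated bound $3l'/m$ depends only on the size $m$ of the head set, which is left unchanged by the removal of tail vertices. Beyond this bookkeeping the argument is a routine composition of the two already-established estimates together with the elementary range $\kappa\in[-2,1]$, so I expect no substantial obstacle.
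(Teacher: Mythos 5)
Your proof is correct and follows essentially the route the paper intends: the corollary is stated there without proof as an immediate combination of the two preceding reduction theorems, and your interpolation through the intermediate hyperedge $e''$ with the triangle inequality is precisely that combination. Your justification of the $\wedge\,3$ cap via $d(u,v)\leq 3$ on the supports, hence $W\in[0,3]$ and $\kappa\in[-2,1]$, is a valid filling-in of a detail the paper leaves implicit.
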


\subsection{Adding vertices (Extension)}
Here we want to obtain  bounds for the curvature of a hyperedge  obtained by adding some new vertices to the set $A$ and/or to $B$ and possibly adding new connections between them.
\begin{theorem}
Let $e:A=\{x_1,\ldots,x_n\}\to B=\{y_1,\ldots, y_m\}$. By adding $l$ vertices to $A$ and $l^\prime$ vertices to $B$ we get a hyperedge $e^\prime:A^\prime=\{x_1,\ldots,x_{n+l}\}\to B^\prime=\{y_1,\ldots, y_{m+l^\prime}\}$ with
\[|\kappa(e^\prime)-\kappa(e)|\leq 3\left(\frac{l}{l+n}+ \frac{l^\prime}{l^\prime+m}\right)\wedge 3\]
\end{theorem}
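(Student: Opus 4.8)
The plan is to obtain this estimate for free from the reduction bound already proved, simply by reversing the direction in which the modification is read. The hyperedge $e:A\to B$ is precisely the hyperedge that results from deleting the $l$ newly added tail vertices from $A'$ and the $l'$ newly added head vertices from $B'$ in $e':A'\to B'$. So instead of analysing the effect of \emph{adding} vertices, I would regard $e$ as a \emph{reduction} of $e'$ and apply the preceding Corollary on removing vertices.

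The only genuine bookkeeping point is the normalisation of the denominators. In that Corollary the fractions are formed from the sizes of the tail and head of the hyperedge one begins with, before any deletion. Here the hyperedge one begins with is $e'$, whose tail and head have sizes $|A'|=n+l$ and $|B'|=m+l'$. Deleting $l$ vertices from a tail of size $n+l$ and $l'$ vertices from a head of size $m+l'$ therefore gives
\[
|\kappa(e)-\kappa(e')|\leq 3\left(\frac{l}{n+l}+\frac{l'}{m+l'}\right)\wedge 3,
\]
which is the asserted inequality, the absolute value being symmetric in its two arguments. The cap $\wedge 3$ requires no separate argument: since every hyperedge curvature lies in $[-2,1]$ (the value $-2$ being approached by hypertrees and $1$ attained by hyperloops), one always has $|\kappa(e)-\kappa(e')|\leq 3$.

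The step I expect to need the most care is checking that deleting the new vertices from $e'$ really reproduces $e$, so that the Corollary applies literally. This is clear when the extension attaches each new vertex together with only its own incoming hyperedges (if added to $A$) or outgoing hyperedges (if added to $B$): the measures $\mu_{x_i}$ and $\mu_{y_j}$ of the original vertices are then untouched and are recovered verbatim upon deletion. If the extension also creates hyperedges incident to the old vertices, the reverse deletion need not return $e$ exactly, and I would fall back to the direct extreme-scenario computation used in the reduction theorem: the mass carried by the $l$ added tail vertices amounts to $l/(n+l)$ and the holes at the $l'$ added head vertices to $l'/(m+l')$, and in the two extreme configurations this mass is either entirely stable (sitting on directed $3$-cycles or loops, contributing to $\mu_0$) or must be shipped at the maximal distance $3$. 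Feeding the resulting changes in $\mu_0,\mu_2,\mu_3$ into the curvature formula $\kappa=\mu_0-\mu_2-2\mu_3$ yields the same two-sided bound $3\left(\frac{l}{n+l}+\frac{l'}{m+l'}\right)$, again capped by the a priori range of $\kappa$.
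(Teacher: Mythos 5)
Your proposal is correct, but its primary route is genuinely different from the paper's. The paper proves the extension bound directly, by rerunning the extreme-scenario analysis of the reduction theorem with the new normalisations: the added tail vertices carry total mass $\frac{l}{l+n}$ and the added head vertices holes of total size $\frac{l'}{l'+m}$, and in the two extreme configurations this amount shifts between $\mu_0$ and $\mu_3$, so that $\kappa=\mu_0-\mu_2-2\mu_3$ changes by at most $3\left(\frac{l}{l+n}+\frac{l'}{l'+m}\right)\wedge 3$. You instead read the operation backwards: $e$ is the reduction of $e'$, and since the denominators in the reduction corollary are the tail and head sizes \emph{before} deletion (here $n+l$ and $m+l'$), that corollary applied to $e'$ yields the asserted inequality at once; your justification of the cap $\wedge\, 3$ via $\kappa\in[-2,1]$ is also sound, since every relevant transport distance is at most $3$. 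What your route buys is economy and an explicit duality between extension and reduction that the paper leaves implicit; what it costs is the compatibility check you yourself flag, and there your verification is incomplete in one respect: the reduction corollary compares $\kappa(e')$ with the curvature of $e$ computed inside the \emph{extended} hypergraph, so besides the measures (which you do check) you must also check that the added hyperedges do not shorten any distance between $\mathrm{supp}\ \mu_{A^{in}}$ and $\mathrm{supp}\ \mu_{B^{out}}$. Under your stated hypothesis this does hold, by a one-line argument worth adding: once the new vertices are deleted from $e'$, each new tail vertex loses its only outgoing hyperedge (namely $e'$ itself) and each new head vertex its only incoming one, so the added hyperedges become dead ends and cannot create shortcuts, hence distances as well as measures are preserved. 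In the general case, where the extension attaches hyperedges to the old vertices, your fallback is exactly the paper's own argument, so nothing is lost and the proposal overall meets the paper's standard of rigor.
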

\begin{proof}
Here, since to each $x_i$ in $A^\prime$ we assign $\frac{1}{l+n}$   of the total mass $(=1)$ and $\frac{1}{l^\prime+m}$ of the total hole $(=1)$ to each $y_j$ in $B$, by considering the two extreme scenarios as before we have:
\[\mu_0(e)\to \mu_0(e^\prime)\pm\left(\frac{l}{l+n}+ \frac{l^\prime}{l^\prime+m}\right)\wedge 1\]
and therefore:
\[\mu_3(e)\to \mu_3(e^\prime)\mp\left(\frac{l}{l+n}+ \frac{l^\prime}{l^\prime+m}\right)\wedge 1\]
So the proof is complete.
\end{proof}

\begin{remark}
Reduction and extension of a directed  hypergraph can be occurred in the level of directed hyperedges by adding or removing some hyperedges which connect any sets of vertices which are incoming to A and /or outgoing from B. In special case if we connect all the subsets of A to the all the subsets of B we obtain a directed simplicial complex out of  directed hyperedge (e) which has the same curvature as $e$ since in this construction the distance between any of masses and any of holes and their sizes will not change .
\end{remark}

\section{Directed hypergraphs with constant Ricci curvature}
In this section,  we want to construct  examples of  directed hypergraphs in which the curvature of the hyperedges is constant $(\kappa=1,\kappa=0,\kappa=-2)$. In the case of $\kappa=0$ these (hyper)graphs are called Ricci flat. For brevity,  we also call the others  Ricci $1$ and Ricci $-2$ directed hypergraphs.
\begin{itemize}
\item Ricci 1 directed hypergraphs 
\begin{theorem}\label{Ric1-1}
The vertices of a Ricci 1 directed loopless hypergraph which for every hyperedge $e:\{x_1,\ldots,x_n\}\to\{y_1,\ldots, y_m\}$ does not have any hyperedge in the reverse direction $(\not\exists e^\prime:y_j\to x_i)$, can be divided into $3$ subsets $A, B, C$ such that $A\to B\to C\to A$. This  means that some (not necessarily all) vertices in $A$  are connected to vertices in $B$ via a non-empty collection of directed hyperedges and similarly for the other connections as in the picture. 
\begin{center}
\vspace{-0.01 cm} \includegraphics[width=3.5 cm]{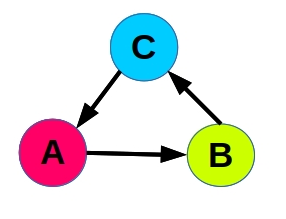}
\end{center}
\end{theorem}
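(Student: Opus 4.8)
The plan is to turn the curvature condition into an equality of measures and then read off a cyclic three-colouring of the vertex set. First I would use $\kappa(e)=1-W(\mu_{A^{in}},\mu_{B^{out}})$ to observe that $\kappa(e)=1$ is equivalent to $W(\mu_{A^{in}},\mu_{B^{out}})=0$, and since every pair of distinct vertices has $d\geq 1$ while $d(u,u)=0$, the cost vanishes exactly when all mass sits on the diagonal, i.e.\ when $\mu_{A^{in}}=\mu_{B^{out}}$. I would then extract two consequences. (i) No $x_i$ can have $d_{x_i^{in}}=0$: otherwise $\mu_{A^{in}}$ carries mass $1/n$ at $x_i$, whereas $\mu_{B^{out}}$ only places mass on the $y_j$ or on out-neighbours of $B$; since the hypergraph is loopless ($A\cap B=\emptyset$) and has no reverse hyperedge, $x_i$ is neither a $y_j$ nor an out-neighbour of any $y_j$, so $\mu_{B^{out}}(x_i)=0$, a contradiction. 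Symmetrically every $y_j$ has $d_{y_j^{out}}\neq 0$. (ii) Hence both measures are spread purely over neighbours, and $\mu_{A^{in}}=\mu_{B^{out}}$ forces the supports to coincide: writing $C_e:=\operatorname{supp}\mu_{A^{in}}=\operatorname{supp}\mu_{B^{out}}$, every vertex of $C_e$ is simultaneously an in-neighbour of $A$ and an out-neighbour of $B$. This yields the local cyclic pattern $C_e\to A\xrightarrow{e}B\to C_e$, with $C_e\neq\emptyset$ by (i).

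Next I would pass to the derived directed graph $D$ on $V$, where $u\to w$ whenever some hyperedge has $u$ in its tail and $w$ in its head. Looplessness rules out $1$-cycles in $D$, and the absence of reverse hyperedges rules out $2$-cycles: an edge $u\to w$ coming from a hyperedge $e$ places $u$ in the tail and $w$ in the head of $e$, so any edge $w\to u$ would be a hyperedge from a head vertex to a tail vertex of $e$, which is forbidden. The goal then becomes to produce a map $\phi\colon V\to\mathbb{Z}/3$ with $\phi(w)=\phi(u)+1$ for every edge $u\to w$ of $D$; setting $A:=\phi^{-1}(0)$, $B:=\phi^{-1}(1)$, $C:=\phi^{-1}(2)$ gives the desired $A\to B\to C\to A$, because such a $\phi$ automatically makes each tail monochromatic and each head the next colour. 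Non-emptiness of the three transitions is then immediate: the fixed hyperedge $e$ supplies $A\to B$, while $C_e\neq\emptyset$ (coloured $2$ by both $C_e\to A$ and $B\to C_e$) supplies $B\to C$ and $C\to A$.

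The construction of $\phi$ I would carry out by fixing one base vertex in each weakly connected component of $D$ and propagating its colour, increasing by $1$ along each forward edge. Well-definedness is equivalent to the assertion that every closed walk in $D$ has net forward length $\equiv 0 \pmod 3$. The engine for this is the local relation $C_e\to A\to B\to C_e$, which I would use to show that every cyclic detour forced by a hyperedge changes the running colour by a multiple of $3$. Concretely, the plan is to prove that whenever two directed paths in $D$ join the same ordered pair of vertices their lengths differ by a multiple of $3$, by pushing one path onto the other through the $C_e\to A\to B\to C_e$ triangles one hyperedge at a time.

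I expect this last consistency step to be the main obstacle. The difficulty is that $C_e\to A\to B\to C_e$ is a relation between \emph{sets}, not a genuine closed walk on individual vertices: the out-neighbour of $B$ reached from some $y_j$ need not be the in-neighbour of $A$ one started from, so a period-$3$ grading does not drop out of a single triangle and must be assembled from overlapping hyperedges. I would handle this by induction on the length of a shortest offending walk, using the \emph{equality of measures} $\mu_{A^{in}}=\mu_{B^{out}}$ (not merely of supports) to guarantee that the in- and out-neighbourhoods are matched vertex-by-vertex with equal weights, which is precisely what lets one splice the local triangles together without accumulating a residue modulo $3$.
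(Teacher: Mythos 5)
Your opening analysis is correct and coincides with the first step of the paper's own proof: $\kappa(e)=1$ forces $W(\mu_{A^{in}},\mu_{B^{out}})=0$, and since a zero-cost coupling must concentrate on the diagonal, this is equivalent to $\mu_{A^{in}}=\mu_{B^{out}}$, which is exactly the paper's condition \eqref{star}; your observation (i), that looplessness plus the no-reverse-hyperedge hypothesis forces $d_{x_i^{in}}\neq 0$ and $d_{y_j^{out}}\neq 0$, is a correct refinement that the paper leaves implicit. The genuine gap is everything after that. Recasting the conclusion as the existence of a potential $\phi\colon V\to\mathbb{Z}/3$ with $\phi(w)=\phi(u)+1$ along every edge of the derived graph $D$ is a fine reformulation, but the existence of such a $\phi$ is \emph{equivalent} to the assertion that every closed walk in a weak component of $D$ has signed length $\equiv 0 \pmod 3$ --- and that assertion is the theorem. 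You do not prove it; you describe a hoped-for induction (``splice the local triangles one hyperedge at a time'') and yourself flag it as the main obstacle. Worse, the tool you propose to splice with is weaker than you suggest: curvature $1$ of $e\colon A\to B$ yields $\mathrm{supp}\,\mu_{A^{in}}=\mathrm{supp}\,\mu_{B^{out}}$ and equality of the \emph{aggregated} measures $\sum_i\mu_{x_i}=\sum_j\mu_{y_j}$; it does not match the in-neighbourhood of an individual $x_i$ with the out-neighbourhood of an individual $y_j$ ``vertex-by-vertex,'' and two hyperedges met consecutively along an arbitrary closed walk need not share tails, heads, or supports, so it is unclear how a shortest offending walk would ever be shortened. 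As written, your argument establishes the local cyclic pattern $C_e\to A\xrightarrow{e}B\to C_e$ and then asserts, rather than proves, that these local patterns are globally compatible.

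For comparison, the paper never sets up the mod-$3$ bookkeeping at all; it runs a monotone absorption argument instead. It fixes a seed hyperedge $e_1\colon A_1\to B_1$, sets $C_1:=\mathrm{supp}\,\mu_{A_1^{in}}=\mathrm{supp}\,\mu_{B_1^{out}}$ by \eqref{star}, and then repeatedly enlarges the three sets: any further hyperedge leaving $A_1$, entering $B_1$, or incident to $C_1$ also has curvature $1$, so its own support identity ties its head or tail back to the sets already built (e.g.\ $C_2:=\mathrm{supp}\,\mu_{B_{11}^{out}}=\mathrm{supp}\,\mu_{A_1^{in}}\supseteq C_1$), and the partition is the increasing union $A=\bigcup A_i$, $B=\bigcup B_j$, $C=\bigcup C_k$. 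Your global consistency claim is essentially equivalent to the coherence of that absorption process, so passing to the $\mathbb{Z}/3$-potential formulation does not let you bypass it: to complete your route you would still have to supply the gluing lemma, which is precisely the part of your write-up that is missing.
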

\begin{proof}
Consider a hyperedge $e_1:A_1\to B_1$. Since $\kappa(e_1)=1$
\begin{equation}\label{star}
\mathrm{supp}\ \mu_{A_1^{in}}=\mathrm{supp}\ \mu_{B_1^{out}}=:C_1\quad\text{and}\quad \forall z\in C_1: \mu_{A_1^{in}}(z)=\mu_{B_1^{out}}(z)
\end{equation}
So the diagram related to $e_1$ looks  like 
\begin{center}
\vspace{-0.01 cm} \includegraphics[width=4 cm]{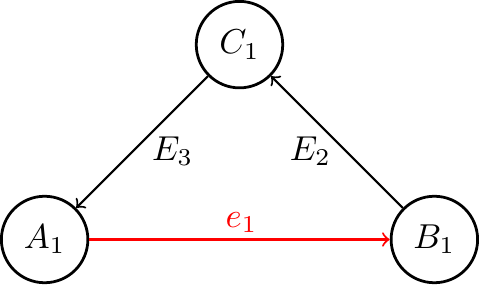}
\end{center}
where  $E_2$ and $E_3$ represent  collections of directed hyperedges. Now, if there is no outgoing hyperedge from $C_1$ other than elements in $E_3$ and there is no incoming hyperedge to $C_1$ other than elements of $E_2$ and there is no outgoing hyperedge other than $e_1$ from $A_1$ and there is no incoming hyperedge to $B_1$ other than $e_1$, then $A_1,B_1$ and $C_1$ would be the desired partitioning set and  since the condition holds for every hyperedge in $E_2$ and $E_3$, the hypergraph is Ricci 1. If any of the above conditions does not hold, we can extend $A_1$ and/or $B_1$ and/or $C_1$ as follows: \\
For instance, let there be  at least one hyperedge going out of $A_1$ other than $e_1$; we call it $e_{OA_1}:A_1\to B_{11}$ and we put $B_2=B_1\cup B_{11}$. Since $\kappa(e_{OA_1})=1$, so $C_2:=\mathrm{supp}\ \mu_{B_{11}^{out}}=\mathrm{supp}\ \mu_{A_1^{in}}\supseteq C_1$. We next consider edges in $E_3$.  If any of them has an endpoint outside $A_1$ and if the set of endpoints of $E_3$ is denoted by $A_2$, then  $A_1\subseteq A_2$. By repeating this process we obtain an increasing sequence of $A_i$'s, $B_j$'s and $C_k$'s. We put $A=\cup A_i$, $B=\cup B_j$ and $C=\cup C_k$.  Obviously, based on the process, elements in $A$ are connected to $B$, $B$ to $C$ and $C$ to $A$ and these $3$ sets are our desired partition.
\begin{center}
\vspace{-0.01 cm} \includegraphics[width=4 cm]{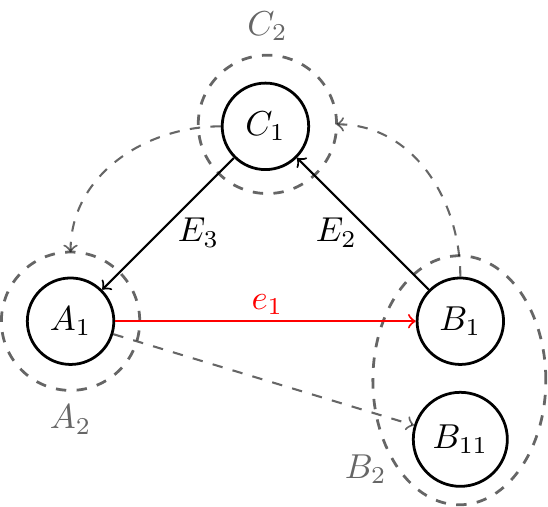}
\end{center} 
 
\end{proof}
\begin{remark}
The converse of this theorem is not necessarily true. For instance,  the following hypergraph  is not Ricci 1 although there is such a partitioning hypergraph:
\begin{center}
\vspace{-0.01 cm} \includegraphics[width=3.5 cm]{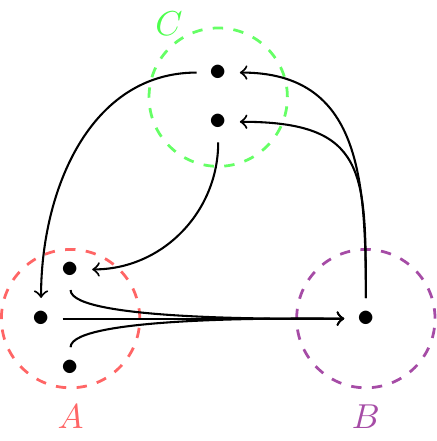}
\end{center}
Instead we have the following:
\end{remark}

\begin{theorem}\label{theorem 2}
If in the corresponding directed graph of a directed (loopless) hypergraph, the set of vertices can be partitioned into $3$ different sets $A,B,C$ such that $A\to B\to C\to A$ and all of the elements in $A$ are connected (via directed edges) to all the elements in $B$ and similarly for the other arrows, then the original directed hypergraph is Ricci 1.
\end{theorem}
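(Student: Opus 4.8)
The plan is to reduce the statement to the pointwise equality of the two measures attached to each hyperedge. Recall that $\kappa(e)=1$ is equivalent to $W(\mu_{A^{in}},\mu_{B^{out}})=0$, and since $d(u,v)\ge 1$ for any two distinct vertices, a coupling of zero cost must keep every unit of mass in place; hence $W=0$ forces $\mu_{A^{in}}=\mu_{B^{out}}$, which is exactly the condition isolated in \eqref{star}. So it suffices to show that for \emph{every} hyperedge $e$ of the hypergraph the measures $\mu_{A^{in}}$ and $\mu_{B^{out}}$ coincide. First I would use the hypothesis to constrain the combinatorics: since the corresponding directed graph carries edges only of the three types $A\to B$, $B\to C$, $C\to A$, no single hyperedge can have its tail (or its head) meeting two blocks, for otherwise two of its induced directed edges would leave a common vertex into two different blocks. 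Consequently every hyperedge is of one of three types, with tail inside one block and head inside the cyclically next block of $A\to B\to C\to A$.

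Fix a hyperedge $e:S\to T$ with $S\subseteq A$ and $T\subseteq B$; the other two types are obtained by cyclically relabelling the blocks. The incoming neighbours of any $x_i\in S$ lie in $C$ (the only block with edges into $A$), and by completeness every vertex of $C$ is such an incoming neighbour, so $\mathrm{supp}\ \mu_{A^{in}}=C$. Dually, the outgoing neighbours of any $y_j\in T$ lie in $C$, and completeness makes all of $C$ appear, so $\mathrm{supp}\ \mu_{B^{out}}=C$. Thus both measures are supported on exactly the same set $C$, which is the first half of \eqref{star}. It remains to match them value by value: I would compute $\mu_{A^{in}}(c)$ and $\mu_{B^{out}}(c)$ for $c\in C$ directly from the definition of the measures, summing the contributions $\tfrac{1}{n\, d_{x_i^{in}}\,\#\mathrm{tail}(e')}$ and $\tfrac{1}{m\, d_{y_j^{out}}\,\#\mathrm{head}(e')}$, and show each evaluates to the uniform weight $1/|C|$. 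In the realization drawn in the figure, where each arrow of the cycle is carried by a single complete hyperedge, this is immediate: each $x_i$ then has a unique incoming hyperedge with tail $C$, giving $\mu_{x_i}(c)=\tfrac{1}{n|C|}$ and $\mu_{A^{in}}(c)=1/|C|$, and symmetrically $\mu_{B^{out}}(c)=1/|C|$. Hence the two measures agree, $W=0$, and $\kappa(e)=1$.

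The main obstacle is precisely this last pointwise comparison, and it is where the completeness hypothesis must be used in full. The quantities $\mu_{A^{in}}(c)$ and $\mu_{B^{out}}(c)$ are built from entirely different families of hyperedges (those realizing $c\to S$ versus those realizing $T\to c$) and are weighted by tail/head cardinalities and by the in- and out-degrees; there is no formal reason for them to match unless the connectivity between consecutive blocks is genuinely complete and evenly distributed. I would therefore state and prove the equality for the symmetric realization of the three arrows, and be careful here, since for unbalanced groupings of hyperedges producing the \emph{same} corresponding directed graph the two marginals can differ and the curvature can drop below $1$. An alternative packaging of the argument is to invoke Theorem~\ref{lower}: once each single-arrow edge $x_i\to y_j$ has curvature $1$, we get $\kappa(e)\ge\min_{i,j}\kappa(e_{ij})=1$, and since $\kappa\le 1$ always, equality follows; but this only relocates the same computation, as one still has to establish that the per-vertex in- and out-measures are uniform on $C$.
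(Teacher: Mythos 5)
You take a genuinely different route from the paper, and your closing hedge turns out to be the decisive point. The paper's own proof is indirect: it first proves the graph-level characterization (a directed loopless graph is Ricci~$1$ if and only if it admits such a complete cyclic tripartition), concludes that every edge of the corresponding directed graph has curvature $1$, and then invokes Theorem~\ref{lower} to get $\kappa(e)\geq 1$, hence $\kappa(e)=1$, for every hyperedge. Your plan instead verifies $\mu_{A^{in}}=\mu_{B^{out}}$ hyperedge by hyperedge; your reduction of $\kappa(e)=1$ to this equality is correct, as are your two combinatorial steps (every hyperedge runs from one block to the cyclically next one, and both supports equal $C$).

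However, the ``obstacle'' you flag at the end is not a defect of your write-up so much as a counterexample to the statement itself, and it is exactly the hidden flaw in the paper's proof. The hypothesis constrains only the corresponding directed graph, while $\mu_{A^{in}}$ and $\mu_{B^{out}}$ depend on how the hyperedges realize those directed edges, through the in-/out-degrees and the tail/head cardinalities. Concretely, take $V=\{a,b,c_1,c_2\}$ with hyperedges $\{a\}\to\{b\}$, $\{b\}\to\{c_1,c_2\}$, $\{c_1\}\to\{a\}$, and $\{c_1,c_2\}\to\{a\}$. The corresponding directed graph is precisely the complete cyclic tripartite graph on $A=\{a\}$, $B=\{b\}$, $C=\{c_1,c_2\}$, so the hypothesis holds; yet for the hyperedge $\{a\}\to\{b\}$ one computes $\mu_{A^{in}}(c_1)=\tfrac{1}{2}+\tfrac{1}{4}=\tfrac{3}{4}$ and $\mu_{A^{in}}(c_2)=\tfrac{1}{4}$, while $\mu_{B^{out}}(c_1)=\mu_{B^{out}}(c_2)=\tfrac{1}{2}$; since $d(c_1,c_2)=3$, this forces $W=\tfrac{3}{4}$ and $\kappa=\tfrac{1}{4}\neq 1$. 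The paper's argument breaks on this example because the edge curvatures appearing in Theorem~\ref{lower} must be computed with the hypergraph-induced per-vertex measures (that is what the marginal identity in that proof requires), whereas the graph characterization yields curvature $1$ only for the uniform measures of the corresponding directed graph; the two coincide only under the balancedness you call ``evenly distributed.'' So your argument is correct exactly in the balanced case you actually prove (e.g., each arrow of the tripartition carried by a single hyperedge with full tail and head), your remark that the route through Theorem~\ref{lower} ``only relocates the same computation'' is accurate, and the honest conclusion is that the theorem needs an additional hypothesis guaranteeing that the induced measures match --- not that your approach is missing a step which the paper supplies.
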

Before proving this theorem we state the next theorem.  
\begin{theorem}
A directed (loopless) graph is Ricci 1 iff it's set of vertices can be partitioned into $3$ sets $A,B,C$ such that $A\to B\to C\to A$ and all the vertices in $A$ are connected to all the vertices in $B$ and similarly for the other arrows,  as  in the diagram.
\begin{center}
\vspace{-0.01 cm} \includegraphics[width=3.2 cm]{1c.png}
\end{center}
\end{theorem}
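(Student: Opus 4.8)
The plan is to turn the condition $\kappa=1$ at every edge into a purely combinatorial closure condition on neighborhoods, and then to bootstrap that condition, starting from a single edge, until the whole vertex set is reconstructed as the three desired parts. For a vertex $v$ I write $I(v)=\{z:z\to v\}$ and $O(v)=\{z:v\to z\}$ for its in- and out-neighborhoods; for a graph these are just $\mathrm{supp}\,\mu_{v^{in}}$ and $\mathrm{supp}\,\mu_{v^{out}}$ whenever the corresponding degree is nonzero. First I would record the meaning of $\kappa(e)=1$: for $e:x\to y$ the measures $\mu_{x^{in}},\mu_{y^{out}}$ are uniform on $I(x)$ and $O(y)$, and $\kappa(e)=1$ means $W(\mu_{x^{in}},\mu_{y^{out}})=0$, i.e.\ $\mu_{x^{in}}=\mu_{y^{out}}$; since $d(u,v)=0$ only when $u=v$, this is equivalent to $I(x)=O(y)$ (equal supports force the uniform weights $1/d_{x^{in}}=1/d_{y^{out}}$ to agree). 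Call this identity $(\star)$; it is the graph version of \eqref{star}. I would also rule out degenerate vertices: if $x\to y$ with $d_{x^{in}}=0$ then $\mu_{x^{in}}=\delta_x$, and matching it to $\mu_{y^{out}}$ forces $O(y)=\{x\}$, hence $y\to x$ and $d_{x^{in}}\ge 1$, a contradiction; symmetrically every head of an edge has positive out-degree. So all neighborhoods below are nonempty and $(\star)$ holds as $I(x)=O(y)\neq\emptyset$ for every edge.

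For the forward direction I would assume $G$ weakly connected (otherwise I run the argument on each weakly connected component) and fix one edge $x_0\to y_0$, setting
\[ A:=I(y_0),\qquad B:=O(x_0),\qquad C:=I(x_0)=O(y_0),\]
so that $x_0\in A$, $y_0\in B$ and $C\neq\emptyset$. The core step is to show, by repeated use of $(\star)$, that the neighborhoods are constant on each part. For $z\in C$ the edges $z\to x_0$ and $y_0\to z$ give $I(z)=O(x_0)=B$ and $O(z)=I(y_0)=A$; for $a\in A$ the edge $a\to y_0$ gives $I(a)=O(y_0)=C$, whence, picking any $z\in C$, the edge $z\to a$ gives $O(a)=I(z)=B$; the statement for $b\in B$ is symmetric, yielding $I(b)=A$ and $O(b)=C$. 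This says exactly that every vertex of $A$ is joined to all of $B$ and to nothing else, every vertex of $B$ to all of $C$, and every vertex of $C$ to all of $A$, i.e.\ $A\to B\to C\to A$ with complete connections.

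It then remains to see that $A,B,C$ partition $V$. For disjointness, if some $v$ lay in $A\cap B$ then reading $O(v)$ and $I(v)$ off both memberships forces $A=B=C$, and then $O(v)=A\ni v$ would give a loop $v\to v$, contradicting looplessness; the remaining intersections are handled the same way. For covering, since the in- and out-neighbors of any vertex of $A\cup B\cup C$ again lie in $A\cup B\cup C$, this set is a union of weakly connected components, so weak connectivity gives $A\cup B\cup C=V$. For the (easy) backward direction I would simply invoke Theorem~\ref{theorem 2}: a directed graph is its own corresponding directed graph, so a complete cyclic tripartition is exactly its hypothesis and forces $\kappa=1$ on every edge; alternatively one checks $(\star)$ directly, since for $x\in A,\ y\in B$ one has $I(x)=C=O(y)$.

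I expect the main obstacle to be the forward bootstrapping step, namely promoting the single identity $(\star)$ at $x_0\to y_0$ to the global statement that the three neighborhoods are constant across the parts; the subtle points are to verify nonemptiness of each neighborhood (the degeneracy analysis above) before every appeal to $(\star)$, and to notice that weak connectivity is genuinely required for a \emph{single} complete tripartition — a disjoint union of two directed $3$-cycles is Ricci~$1$ yet admits no such complete partition.
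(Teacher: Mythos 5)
Your proposal is correct, and its forward direction takes a genuinely different route from the paper's. The paper does not argue directly: it points back to the iterative construction in the proof of Theorem \ref{Ric1-1} (start from one edge, enlarge $A_1,B_1,C_1$ whenever some edge enters or leaves the current sets, and set $A=\cup A_i$, $B=\cup B_j$, $C=\cup C_k$), adding only the graph-specific remark that $\mathrm{supp}\,\mu_{x^{in}}=\mathrm{supp}\,\mu_{y^{out}}$ forces the tails of the incoming edges at $x$ to coincide with the heads of the outgoing edges at $y$, which is why the connections come out complete. You replace the iteration by a one-shot closure computation: defining $A=I(y_0)$, $B=O(x_0)$, $C=I(x_0)=O(y_0)$ from a single edge and applying $(\star)$ repeatedly, you show each part has constant in- and out-neighborhoods equal to the appropriate other parts, so the three sets are already absorbing and no extension process is needed. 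This buys two things the paper's sketch leaves unaddressed: disjointness of $A,B,C$ (your looplessness argument) and the fact that they exhaust $V$ --- and on the latter point you correctly isolate weak connectivity as a genuinely necessary hypothesis; your example of two disjoint directed $3$-cycles shows the statement is false as literally written for disconnected graphs, a caveat the paper never flags. What the paper's heavier iterative scheme buys in exchange is that it is the argument that generalizes to hypergraphs (Theorem \ref{Ric1-1}), where the supports of $\mu_{A^{in}}$, $\mu_{B^{out}}$ are no longer plain vertex neighborhoods and your closure computation would not go through verbatim. Your degeneracy analysis (every tail of an edge has positive in-degree, every head positive out-degree) is a needed step that the paper's proof silently assumes.

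One caution on your backward direction: your first suggestion --- invoking Theorem \ref{theorem 2} --- is circular in the paper's logical order, since the paper proves Theorem \ref{theorem 2} \emph{from} the present theorem together with Theorem \ref{lower}. Your fallback is the right move and is in fact the paper's own argument: for $x\in A$, $y\in B$ one has $I(x)=C=O(y)$, hence $\mu_{x^{in}}=\mu_{y^{out}}$, $W=0$, and $\kappa=1$; keep that version and drop the appeal to Theorem \ref{theorem 2}.
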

\begin{proof}
$\Rightarrow$The proof is similar to that of Theorem \ref{Ric1-1}. Here, in addition we should have connections between all the vertices of $A$ to all the vertices of $B$ and so on. The reason is that here, for every edge $e:x\to y$, $d_{in}x=d_{out}y$, and the condition that $\mathrm{supp}\ \mu_{x^{in}}=\mathrm{supp}\ \mu_{y^{out}}$ implies that  the tails of incoming edges to $x$   coincide with  the heads of outgoing edges from $y$. So in the resulted partition every vertex in $A$ is connected to every vertex in $B$ and similarly for the connections between other sets the same situation holds. 

$\Leftarrow$ For proving that every edge has curvature 1, we need   that for every edge $e:x\to y$, $d_{in}x=d_{out}y$ and $\mathrm{supp}\ \mu_{x^{in}}=\mathrm{supp}\ \mu_{y^{out}}$ and for every $z$ in this support $ \mu_{x^{in}}(z)=\mu_{y^{out}}(z)$. Since in the partition the  vertices in $A$ are connected to the vertices of $B$ and so on, for every edge the needed conditions  obviously hold. So the directed graph is Ricci 1.
\end{proof}
\begin{proof}[Proof of Theorem \ref{theorem 2}]
Since we have such a partition for the vertices of the corresponding directed graph of this hypergraph, according to the previous theorem, the curvature of all the edges of every directed hyperedge is 1. So their minimum  also has curvature 1. On the other hand, according to Theorem \ref{lower} 
\[\kappa(\text{every hyperedge})\geq \min \kappa(\text{edges in the corresponding directed graph})\]
So for all hyperedges $e$, $\kappa(e)=1$ and the hypergraph is Ricci 1.
\end{proof}
\begin{remark}
It might be possible that the directed hypergraph is Ricci 1,  but as  shown in the following example, its corresponding directed graph is not.
\begin{center}
\vspace{-0.01 cm} \includegraphics[width=2.3 cm]{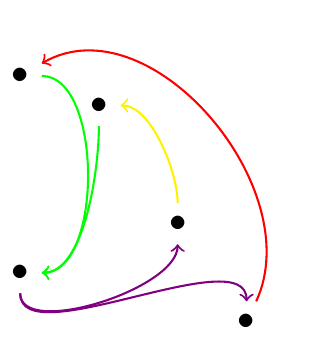}
\end{center}
and the corresponding directed graph is 
\begin{center}
\vspace{-0.01 cm} \includegraphics[width=2.3 cm]{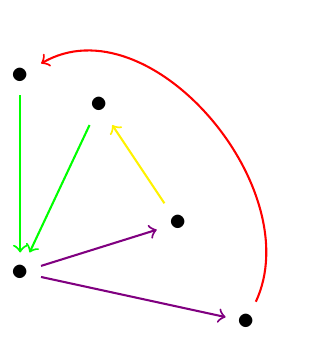}
\end{center}
\end{remark}
\begin{corollary}
If the vertices of a directed hypergraph can be divided into  3 sets, $A$, $B$, $C$ such that all the vertices in these sets are connected to the vertices of the other sets as  shown above, then the hypergraph is Ricci 1. 
\end{corollary}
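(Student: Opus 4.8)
The plan is to reduce the statement directly to Theorem \ref{theorem 2}, so the only real work is verifying that the hypothesis stated for the hypergraph transfers verbatim to its corresponding directed graph. Recall the construction introduced just before Theorem \ref{lower}: the corresponding directed graph has the same vertex set, and for each directed hyperedge with tail set $T$ and head set $H$ it contains a directed edge $u\to v$ for every $u\in T$ and every $v\in H$. The corollary is then essentially a repackaging of Theorem \ref{theorem 2} in which the structural condition is imposed on the hypergraph itself rather than on the graph derived from it.

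First I would unpack the hypothesis. Saying that the vertices split into $A,B,C$ with $A\to B\to C\to A$ and that all vertices of each set are connected to all vertices of the next means precisely: for every $x\in A$ and every $y\in B$ there is a directed hyperedge whose tail contains $x$ and whose head contains $y$, and likewise for the pairs $(B,C)$ and $(C,A)$; moreover, since the hypergraph is loopless and respects the cyclic pattern, no hyperedge runs in any other direction.

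Second, I would translate this into a statement about the corresponding directed graph. By the construction above, each such hyperedge produces the directed edge $x\to y$, so in the corresponding directed graph every vertex of $A$ is joined to every vertex of $B$, every vertex of $B$ to every vertex of $C$, and every vertex of $C$ to every vertex of $A$, with no edges in any other direction. This is exactly the partition $A\to B\to C\to A$ with full connections between consecutive sets that Theorem \ref{theorem 2} requires of the corresponding directed graph. Applying that theorem immediately gives that the original directed hypergraph is Ricci 1.

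The only point requiring any care—and it is the sole potential obstacle—is checking that edge-completeness at the hypergraph level (``all vertices connected to all vertices'') is strong enough to guarantee edge-completeness at the directed-graph level. Since a single hyperedge contributes the entire tail$\times$head bipartite family of directed edges, and the hypothesis supplies, for each ordered pair of consecutive vertices, at least one hyperedge covering it, this is immediate: there is no way for the corresponding directed graph to miss an edge and thereby fail the completeness hypothesis of Theorem \ref{theorem 2}. Hence the reduction is clean and the conclusion follows.
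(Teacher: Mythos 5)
Your argument is correct as a derivation from the paper's stated results, but it is not the paper's proof of this corollary; it is essentially a re-run of the paper's proof of Theorem \ref{theorem 2}. The paper proves the corollary directly at the hypergraph level: for any hyperedge $e$ with tail in $A$ and head in $B$, the incoming neighbours of the tail and the outgoing neighbours of the head both lie in $C$, and the paper asserts that every mass coincides with a hole of the same size, so that $\mu_0=1$, i.e.\ $W(\mu_{A^{in}},\mu_{B^{out}})=0$ and $\kappa(e)=1$. You instead transfer the hypothesis to the corresponding directed graph --- each hyperedge contributes its full tail\,$\times$\,head family of directed edges, so full hyperedge-connectivity between consecutive sets gives full edge-connectivity there --- and then invoke Theorem \ref{theorem 2}, which itself rests on the unlabeled directed-graph characterization and on Theorem \ref{lower}. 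What each route buys: yours is modular and makes the statement a genuine corollary, the only content being the (correct) observation that the structural hypothesis survives the passage to the corresponding graph; through Theorem \ref{lower} it yields $\kappa(e)\geq 1$, which together with the general bound $\kappa\leq 1$ finishes. The paper's direct route is self-contained and nominally stronger --- it claims the two measures coincide, not merely a lower bound on the curvature --- but that pointwise equality of masses and holes is precisely the delicate point (it depends on how the connections $C\to A$ and $B\to C$ are realized by hyperedges, e.g.\ on the sizes and multiplicities of the tails and heads involved), and the paper leaves it unverified. Your reduction pushes that same subtlety into the already-proven Theorem \ref{theorem 2} rather than confronting it, which is a legitimate and arguably cleaner way to organize the argument; the one hypothesis you should state explicitly rather than in passing is looplessness, since Theorem \ref{theorem 2} is formulated for loopless hypergraphs, although it does follow from the disjointness of $A$, $B$, $C$ and the cyclic pattern of the connections.
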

\begin{proof}
Since we are assuming that all the vertices of $A$ are connected to all the vertices of $B$ and similarly for the other two arrows the same happens, considering every hyperedge  $e:A\to B$ , for any incoming neighbour of $A$ there is a coinciding outgoing neighbour of $B$ and the size of each mass and each hole is the same. Therefore $\mu_0=1$  and the hypergraph is Ricci 1. 
\end{proof}
\begin{corollary}
If in a directed (loopless) hypergraph, the set of vertices can be partitioned into $3$ different sets $A,B,C$ such that $A\to B\to C\to A$ and all of the elements in $A$ are connected (via directed hyperedges) to all the elements in $B$ and similarly for the other arrows, the directed hypergraph is Ricci 1. Here any connection inside any of these sets might violate the constant curvature $1$ for some hyperedges. For instance, in the following directed graph, the curvature of all the edges is 1 but the red edge has curvature $-1$ . 
\begin{center}
\vspace{-0.01 cm} \includegraphics[width=4 cm]{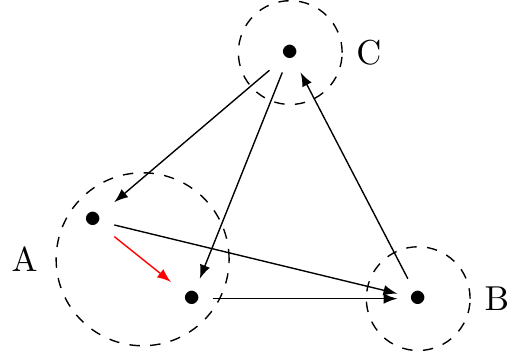}
\end{center}
 
\end{corollary}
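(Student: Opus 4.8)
The plan is to reduce the statement to Theorem \ref{theorem 2} by observing that the connectivity hypothesis, phrased here in terms of directed hyperedges, is exactly the hypothesis of that theorem once transported to the corresponding directed graph. Recall that the corresponding directed graph contains a directed edge $a\to b$ precisely when some directed hyperedge has $a$ in its tail and $b$ in its head. Hence the assumption that every element of $A$ is connected to every element of $B$ by a directed hyperedge (and likewise for $B\to C$ and $C\to A$) says exactly that, in the corresponding directed graph, every vertex of $A$ is joined to every vertex of $B$ by a directed edge, and similarly for the two remaining arrows, with the cyclic pattern $A\to B\to C\to A$ preserved and the three sets still forming a partition. Since the hypothesis lists only these three inter-set arrows, the corresponding directed graph carries no edges inside $A$, $B$, or $C$.

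First I would make this translation explicit, so that the corresponding directed graph satisfies verbatim the hypothesis of Theorem \ref{theorem 2} (equivalently, of the characterization of Ricci $1$ directed graphs preceding it). Applying that theorem then yields $\kappa(e)=1$ for every hyperedge $e$, so the hypergraph is Ricci $1$. Alternatively one can argue directly: by the characterization of Ricci $1$ directed graphs every edge $x_i\to y_j$ of the corresponding graph has curvature $1$, and then Theorem \ref{lower} gives $\kappa(e)\ge\min_{i,j}\kappa(e_{ij})=1$; since $\kappa\le 1$ always, equality follows.

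Because the argument is a reformulation of Theorem \ref{theorem 2}, the only genuinely new point to secure is the scope of the hypothesis, namely that it tacitly forbids connections \emph{within} $A$, $B$, or $C$; this is the content of the closing remark. To justify it I would analyse the displayed example: an intra-set directed edge changes the in- and out-degrees of its endpoints, hence the measures $\mu_{A^{in}}$ and $\mu_{B^{out}}$ assigned along an incident hyperedge, destroying the matching of masses with holes that forces $\mu_0=1$. Concretely, for the red edge one computes the $\mu_i$ and reads off $\kappa=\mu_0-\mu_2-2\mu_3=-1$ from formula \eqref{short}, confirming that the extra intra-set edge breaks constant curvature $1$.

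The main obstacle is not any hard estimate — the curvature bound is inherited wholesale from Theorem \ref{lower} and the directed-graph characterization — but rather the bookkeeping of making the hypothesis precise. One must read ``connected via directed hyperedges'' as the existence of a single hyperedge carrying $a$ in its tail and $b$ in its head (so that it coincides with edge-connectivity in the corresponding graph), and one must insist that the partition admits no intra-set hyperedges. Once these two points are pinned down, the reduction to Theorem \ref{theorem 2} is immediate and the counterexample accounts for the caveat.
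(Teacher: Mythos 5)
Your proposal is correct and follows essentially the same route as the paper, which states this corollary without a separate proof precisely because it is the hyperedge-level restatement of Theorem \ref{theorem 2}: your explicit translation of ``connected via directed hyperedges'' into complete edge-connectivity between the three sets in the corresponding directed graph, followed by an appeal to that theorem (or equivalently to the directed-graph characterization plus Theorem \ref{lower}, which is exactly how the paper proves Theorem \ref{theorem 2}), is the intended derivation, and your treatment of the intra-set-edge caveat matches the paper's accompanying example. Any remaining fragility (e.g.\ that mass sizes on the common support need not automatically coincide when incoming and outgoing hyperedges have different tail/head sizes) is inherited from the paper's own Theorem \ref{theorem 2} and its neighboring corollary, not introduced by your argument.
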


\item Ricci flat directed hypergraphs
\begin{theorem}
If the vertices of a directed hypergraph can be divided into two sets $A$ (source) and $B$ (sink)  such that all the vertices in $A$  have outgoing hyperedges and no incoming hyperedges and all the vertices of $B$  have incoming hyperedges and no outgoing hyperedges, then the hypergraph is Ricci flat . 
\begin{center}
\vspace{-0.01 cm} \includegraphics[width=3 cm]{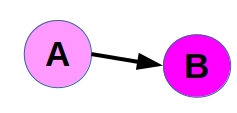}
\end{center}
\end{theorem}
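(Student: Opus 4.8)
The plan is to show directly that every hyperedge $e:\{x_1,\ldots,x_n\}\to\{y_1,\ldots,y_m\}$ of such a hypergraph satisfies $\kappa(e)=0$, by evaluating the two measures $\mu_{A^{in}}$ and $\mu_{B^{out}}$ and observing that their entire mass is forced to travel exactly distance one. First I would record the structural consequence of the source/sink splitting: a vertex lying in the tail of a hyperedge has that hyperedge as an outgoing one, while a vertex lying in the head has it as an incoming one. Since every vertex of $A$ has no incoming hyperedge and every vertex of $B$ has no outgoing hyperedge, this forces the tail of every hyperedge to lie entirely in $A$ and its head entirely in $B$. In particular each tail vertex $x_i$ is a source, so $d_{x_i^{in}}=0$, and each head vertex $y_j$ is a sink, so $d_{y_j^{out}}=0$; and because $A$ and $B$ are disjoint, there are no hyperloops.

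The key step is then reading the measures off the definition in this degenerate case. Since $d_{x_i^{in}}=0$ for every tail vertex, the second branch of the definition of $\mu_{x_i}$ applies and assigns mass $\tfrac1n$ to $x_i$ itself; moreover there are no in-neighbours, so the fourth branch gives value $0$ at every $z\neq x_i$. Hence $\mu_{A^{in}}$ is simply the uniform measure assigning $\tfrac1n$ to each $x_i$ and nothing elsewhere, i.e. it is supported on the tail set. By the symmetric argument, $d_{y_j^{out}}=0$ for every head vertex forces $\mu_{B^{out}}$ to be the uniform measure assigning $\tfrac1m$ to each $y_j$, supported on the head set. Thus the two measures are supported on the disjoint vertex sets $\{x_1,\ldots,x_n\}\subseteq A$ and $\{y_1,\ldots,y_m\}\subseteq B$.

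Next I would compute the transport cost. Because $A\cap B=\varnothing$, no vertex carries positive mass in both measures, so $\mu_0=0$ and every unit of mass must be moved a distance of at least one. On the other hand the hyperedge $e$ itself supplies a directed connection from every $x_i$ to every $y_j$, so $d(x_i,y_j)=1$ for all $i,j$. Consequently every coupling $\mathcal{E}$ between $\mu_{A^{in}}$ and $\mu_{B^{out}}$ is supported on pairs at distance exactly one, and its cost equals the total transported mass, namely $1$; that is, $W(\mu_{A^{in}},\mu_{B^{out}})=1$ regardless of the plan (equivalently $\mu_1=1$ and $\mu_2=\mu_3=0$, so by \eqref{short} $\kappa=\mu_0-\mu_2-2\mu_3=0$). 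Therefore $\kappa(e)=1-W=0$, and since $e$ was arbitrary the hypergraph is Ricci flat.

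The computation is short, and the only genuinely delicate point is the second step: one must verify that the $d^{in}=0$ (respectively $d^{out}=0$) branch of the definition keeps the mass on the vertex itself instead of distributing it to neighbours, since this is precisely what collapses the supports onto the tail and head and rules out any distance-zero or distance-two transport. I would also make explicit that the disjointness of $A$ and $B$, implicit in the source/sink dichotomy, is what forces $\mu_0=0$ and hence pins $W$ down to $1$ from below; the matching upper bound is then automatic from the distance-one links provided by $e$.
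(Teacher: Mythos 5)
Your proof is correct and takes essentially the same approach as the paper's: the source/sink structure pins the mass of $\mu_{A^{in}}$ on the tail vertices and the holes of $\mu_{B^{out}}$ on the head vertices, which sit at distance exactly one, so $\mu_1=1$, $W=1$, and $\kappa=0$. The paper's argument is just a terser version of this; your added verifications (tails lie in $A$, heads in $B$, the branch analysis of the measure definitions, and disjointness forcing $\mu_0=0$) make explicit what the paper leaves implicit.
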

\begin{proof}
Based on the construction, for every hyperedge, the masses are in the  source set $(A)$ which is at distance one from the holes which are in the sink set $(B)$. So $\mu_1$ is equal to $1$ and the hypergraph is Ricci flat . 

\end{proof}
\begin{theorem}
If in a Ricci flat directed hypergraph for every directed hyperedge $e$ there is no incoming hyperedge to its tail set and there in no outgoing hyperedge from its head set, then the set of vertices in this directed hypergraph can be partitioned into two sets $ A$ and $B$ as above.
\end{theorem}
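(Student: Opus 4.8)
The plan is to read the hypothesis at the level of individual vertices: for every hyperedge $e$, each vertex occurring in its tail has in-degree zero and each vertex occurring in its head has out-degree zero. I would then set $A$ to be the set of all vertices that appear in the tail of at least one hyperedge and $B$ to be the set of all vertices that appear in the head of at least one hyperedge, and show that $\{A,B\}$ is the partition claimed.

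The first and essentially only substantive step is to prove that $A$ and $B$ are disjoint. Suppose a vertex $v$ lay in both. Membership in $B$ means $v$ is in the head of some hyperedge $e_2$, so $e_2$ is an incoming hyperedge to $v$ and $d_{v^{in}}\geq 1$. Membership in $A$ means $v$ is in the tail of some hyperedge $e_1$, and applying the hypothesis to $e_1$ forces $d_{v^{in}}=0$. This contradiction gives $A\cap B=\emptyset$. The identical argument applied inside a single hyperedge shows that its own tail and head are disjoint, so for any hyperedge the distance $d(x_i,y_j)$ from a tail vertex to a head vertex equals exactly $1$.

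I would then read off the source/sink properties directly. By construction every $v\in A$ lies in some tail and hence has an outgoing hyperedge, while the hypothesis gives $d_{v^{in}}=0$; thus $A$ is a source set. Symmetrically every $v\in B$ has an incoming hyperedge and $d_{v^{out}}=0$, so $B$ is a sink set. Any vertex lying in some hyperedge is in a tail or a head and hence in $A\cup B$, so the only vertices omitted are isolated ones, which I would assume absent (or assign arbitrarily). This is precisely the partition ``as above,'' matching the hypothesis of the preceding theorem.

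I do not anticipate a real obstacle here: the whole argument turns on the contradiction in the disjointness step, which is where the per-vertex reading of the hypothesis does the work. The one point worth flagging is that, under this structural hypothesis, the Ricci-flat assumption is automatic rather than an extra constraint. Indeed, since every tail vertex has $d_{x_i^{in}}=0$ the measure $\mu_{A^{in}}$ sits on the tail with weight $1/n$ at each $x_i$, and since every head vertex has $d_{y_j^{out}}=0$ the holes $\mu_{B^{out}}$ sit on the head with weight $1/m$ at each $y_j$; as every tail-to-head distance equals $1$, every coupling has cost $1$, whence $W(\mu_{A^{in}},\mu_{B^{out}})=1$ and $\kappa(e)=0$ for each $e$. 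Thus Ricci flatness is consistent with the stated condition, and the partition it produces is exactly the source--sink partition of the previous theorem.
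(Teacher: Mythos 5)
Your proposal is correct and follows essentially the same route as the paper: both take $A$ to be the union of all tail sets and $B$ the union of all head sets, and deduce the source/sink properties directly from the structural hypothesis. The only difference is that you explicitly prove the disjointness of $A$ and $B$ (and note the isolated-vertex caveat and the redundancy of the Ricci-flatness assumption), whereas the paper dismisses these points with ``obviously.''
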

\begin{proof}
We put all the tail sets of all of directed hyperedges in set $A$ and all the head sets of all of directed hyperedges in set $B$. Obviously this is a partitioning of the whole vertices into two sets in which the vertices in $A$ (not necessarily all) are connected to the vertices in $B$ (not necessaries all) and $A$ and $B$ are respectively source and sink sets.
\end{proof}
\begin{theorem}
If the set of vertices of a directed hypergraph can be divided  into 3 sets $A$, $B$,  $C$ such that all the vertices in these sets are connected to the vertices of the other sets as indicated in the diagram, then the hypergraph is Ricci flat. Here, in contrast  to the previous case, the sets are partitioned  into source, saddle and sink sets. The  vertices in a saddle  have both incoming  and outgoing hyperedges. Similar to the previous and the Ricci 1 case, connections inside any of these 3 sets might violate constant curvature along different hyperedges. However, if in addition, the same kind of partitioning with full connections can be found for the vertices of those sets which have inside hyperedges, then presence of these internal connections do not violate flatness .
\begin{center}
\vspace{-0.01 cm} \includegraphics[width=3 cm]{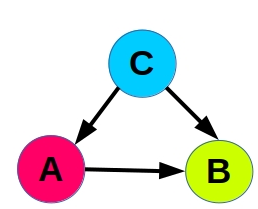}
\end{center}
\end{theorem}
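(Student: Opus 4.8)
The plan is to show that for every hyperedge $e$ one has $W(\mu_{A^{in}},\mu_{B^{out}})=1$, whence $\kappa(e)=1-W=0$; equivalently, via \eqref{short}, that $\mu_0=\mu_2=\mu_3=0$ and all the mass moves with distance exactly one. As in the two–set source/sink Ricci flat case treated above, the mechanism is to locate the supports of the two measures and to check that each mass vertex sits at distance exactly one from each hole vertex. Throughout I use that the inter-set connections are the full transitive ones $A\to B$, $B\to C$ and, crucially, $A\to C$ (this is what the diagram indicates); the direct link $A\to C$ is exactly what separates the flat case from the cyclic Ricci $1$ case $A\to B\to C\to A$.

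First I would dispose of the case with no hyperedges internal to $A,B,C$, so that every hyperedge is of one of the three types $A\to B$, $A\to C$, $B\to C$. For a hyperedge $e:P\to Q$ I claim $\mathrm{supp}\,\mu_{P^{in}}\subseteq A$ and $\mathrm{supp}\,\mu_{Q^{out}}\subseteq C$. Indeed, a tail vertex in the source set $A$ has $d^{in}=0$ and keeps its own mass, while a tail vertex in the saddle set $B$ receives hyperedges only from $A$ and therefore pushes its mass onto vertices of $A$; dually a head vertex in the sink set $C$ has $d^{out}=0$ and keeps its hole, while a head vertex in the saddle set $B$ sends its hole onto its outgoing neighbours, which lie in $C$. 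Since $A\cap C=\emptyset$ and full connectivity gives a directed hyperedge from every vertex of $A$ to every vertex of $C$, we get $d(u,v)=1$ for all $u\in\mathrm{supp}\,\mu_{P^{in}}$ and $v\in\mathrm{supp}\,\mu_{Q^{out}}$. Hence every coupling has cost $1$, so $W=1$ and $\kappa(e)=0$.

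The delicate part is the addendum about hyperedges internal to one of the sets, say a full transitive sub-partition $A_1\to A_2\to A_3$ inside $A$. Such internal edges change where the mass of a hyperedge sits (a tail in $A_2$ now has internal incoming neighbours in $A_1$), so the support description above must be redone level by level. The structural fact I would lean on is that the whole hypergraph is an acyclic layering: one can leave $A$ only towards $B$ or $C$, leave $B$ only towards $C$, and inside $A$ move only from $A_1$ towards $A_2,A_3$, never returning upstream. Consequently internal distances are never shortened through $B$ or $C$, external distances are never corrupted by the internal edges, and the supports can be recomputed unambiguously. Doing so, the mass of every hyperedge still lands in a stratum that is fully connected, at distance one, to the stratum holding the holes; for instance for an internal edge $B_1\to B_2$ the mass lands in $A$ and the holes in $B_3\cup C$, and both $A\to B_3$ and $A\to C$ are full. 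Iterating over the nested partitions again yields $W=1$, hence flatness.

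I expect the main obstacle to be precisely this bookkeeping: formulating a single invariant, stable under the recursion, of the shape ``$\mathrm{supp}\,\mu_{P^{in}}$ and $\mathrm{supp}\,\mu_{Q^{out}}$ are disjoint and lie in fully connected strata'', and checking that no path through the auxiliary sets ever creates a distance-$0$ coincidence of a mass with a hole (which would make $\mu_0>0$ and push $\kappa$ above $0$) nor forces a transport of length $\geq 2$ (which would create $\mu_2$ or $\mu_3$ and push $\kappa$ below $0$). Once that invariant is in place the curvature computation is immediate, since $\kappa(e)=1-W=0$ follows the moment one knows every unit of mass travels distance exactly one.
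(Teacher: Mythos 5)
Your proposal is correct and follows essentially the same route as the paper's own proof: by the source/saddle/sink structure and the full connections (in particular the direct $A\to C$ link), every hyperedge has its masses supported in $A$ and its holes supported in $C$, so every pair is at distance exactly one, giving $\mu_0=0$, $\mu_1=1$, $W=1$ and $\kappa=0$. In fact you are more careful than the paper, which treats only the generic case in one line and silently skips both the three-way case analysis ($A\to B$, $B\to C$, $A\to C$) and the addendum on internal hyperedges that you sketch via the nested-partition argument.
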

\begin{proof}
Since we are assuming that all the vertices of $A$ are connected to all the vertices of $B$ (and for the 3 partitioning sets the same condition holds), for every hyperedge  $e:A\to B$, the distance between any incoming neighbour of $A$ to any outgoing neighbour of $B$ is one. So $\mu_1$ is equal to $1$ (and obviously by construction $\mu_0=0$). Therefore the hypergraph is Ricci flat . 

\end{proof}
\begin{remark}
If not all the possible connections between these sets exist, even with having this partitioning, Ricci flatness might be violated.
\end{remark}
\begin{remark}
Examples of Ricci flat directed hypergraphs can be constructed in which the set of their vertices is partitioned into 3 sets, but not all the above connections are present. In these hypergraphs, as before the presence of internal hyperedges (Likewise, connections inside each of these sets) might violate the flatnesses can be seen in the next figure.
\begin{center}
\vspace{-0.01 cm} \includegraphics[width=3.5 cm]{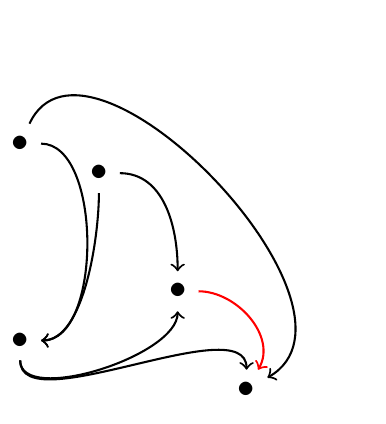}
\end{center}
\end{remark}
\item Ricci negative ($-2$) directed hypergraphs 
\begin{theorem}
If  the set of vertices of a directed hypergraph can be divided into 4 sets, $A$ ,$B$ , $C$  and $D$ such that all the vertices in these sets are connected to the vertices of the other sets as indicated, then the hypergraph is Ricci $-2$ . The presence of internal hyperedges (connections inside each of these sets ) might violate constant curvature along different hyperedges. 
\begin{center}
\vspace{-0.01 cm} \includegraphics[width=3 cm]{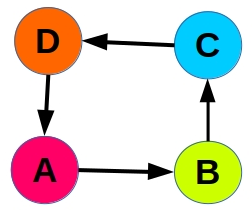}
\end{center}
\end{theorem}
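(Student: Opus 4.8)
The plan is to reduce everything to a single hyperedge and one distance computation. Fix a hyperedge $e\colon A\to B$ in the cyclic pattern $A\to B\to C\to D\to A$ with full connections between consecutive sets, and recall the short formula \eqref{short}, $\kappa(e)=\mu_0-\mu_2-2\mu_3$. I will show that for this hyperedge the whole unit of mass is forced to travel distance exactly $3$, so that $\mu_3=1$ and $\kappa(e)=-2$; the cyclic symmetry of the four sets then yields the same value for hyperedges of the remaining three types.

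First I would pin down the supports of the two measures. Because the connection $D\to A$ is full, every $x_i\in A$ has an incoming hyperedge, so no $x_i$ satisfies $d_{x_i^{in}}=0$; by the definition of $\mu_{A^{in}}$ the mass therefore sits precisely on the incoming neighbours of $A$, which are exactly the vertices of $D$, giving $\mathrm{supp}\,\mu_{A^{in}}=D$. Dually, the full connection $B\to C$ guarantees $d_{y_j^{out}}\neq 0$ for every $y_j\in B$, so $\mathrm{supp}\,\mu_{B^{out}}=C$. In particular the mass and the holes occupy the disjoint sets $D$ and $C$, which already forces $\mu_0=0$.

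Next comes the one real computation: the distance $d(u,v)$ between a mass point $u\in D$ and a hole $v\in C$. The only directed hyperedges leaving $D$ enter $A$, those leaving $A$ enter $B$, and those leaving $B$ enter $C$, so the path $u\to A\to B\to v$ of length $3$ realizes a directed route, and no shorter one exists because $C$ cannot be reached from $D$ in one or two steps along the only available directions. Hence $d(u,v)=3$ for every pair in the two supports. Every admissible coupling $\mathcal{E}$ then transports the entire unit of mass across pairs at distance $3$, so $W(\mu_{A^{in}},\mu_{B^{out}})=3$ irrespective of the coupling, i.e.\ $\mu_3=1$ and $\mu_0=\mu_1=\mu_2=0$; substituting into \eqref{short} gives $\kappa(e)=-2$.

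The main obstacle---indeed the only nontrivial point---is the ``no shortcut'' claim, namely that $d(u,v)$ cannot drop below $3$. This is exactly where the hypotheses that the hypergraph is loopless and that the four sets carry no backward or internal hyperedges are used; any such extra connection could create a directed $3$-cycle (raising $\mu_0$) or a shorter path (lowering some $d(u,v)$), which is why the accompanying remarks caution that internal connections may destroy the constant value $-2$. Once the distance is confirmed to be $3$ for each of the four hyperedge types, every hyperedge has curvature $-2$ and the hypergraph is Ricci $-2$.
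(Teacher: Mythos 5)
Your proposal is correct and takes essentially the same route as the paper's proof, which simply observes that the distance from any incoming neighbour of $A$ to any outgoing neighbour of $B$ equals $3$, so that $\mu_3=1$ and every hyperedge has curvature $-2$. Your write-up merely fills in the details the paper leaves implicit (identifying the supports as $D$ and $C$, the length-$3$ path $u\to A\to B\to v$, the no-shortcut argument, and the cyclic symmetry), which is a faithful elaboration rather than a different approach.
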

\begin{proof}
It is easy to see that  for every hyperedge  $e:A\to B$, the distance between any incoming neighbour of $A$ to any outgoing neighbour of $B$ is 3. So  all $\mu_i =0$,  except $\mu_3=1$, and so, every hyperedge has curvature $-2$. 
\end{proof}
\begin{remark}
Many Ricci $-2$ directed hypergraphs can be constructed in which the set of their vertices is partitioned into four sets, but not all the above connections exist. 
\end{remark}
\begin{remark}
Although  we have presented some general examples of directed Ricci flat and $-2$ hypergraphs, we still cannot classify them. Also  playing  with \eqref{short} and considering different values of the $\mu_i$s, we can obtain non-negative, negative and  non-positive  curvatures for hyperedges, and possibly some Ricci constant hypergraphs. 
\end{remark}
\end{itemize}

\section{Weighted directed hypergraphs}
We can extend our constructions to weighted directed hypergraphs where the  vertices and hyperedges may both carry weights. The vertices may carry different weights depending on the hyperedges  they are involved in (This can be represented by a vector of the dimension of the hyperedge set with non-negative components. Here, zero means the corresponding hyperedge does not involve that vertex).  For a specified hyperedge whose curvature we want to measure, the weights of its vertices need to be fixed, of course. In this case we denote the vertex and hyperedge weights by $w_v$ and $w_e$ respectively.
\begin{definition}
Let $H=(V,E)$ be  a weighed directed hypergraph and $e\in E$ an arbitrary directed hyperedge such that $A=\{x_1,\ldots, x_n\}\xrightarrow{e}B=\{y_1,\ldots, y_m\}$ $(n,m\leq |V|)$. We define the Ollivier Ricci curvature of this hyperedge as
	\[\kappa(e):=1-W(\mu_{A^{in}},\mu_{B^{out}})\]
	where the probability measures $\mu_{A^{in}}$ and $\mu_{B^{out}}$ are   defined on $V$ as follows:
\[\mu_{A^{in}}=\sum_{i=1}^n\mu_{x_i^{in}}\quad \forall 1\leq i\leq n \quad \text{and } \forall z\in V(H)\]
\[\mu_{x_i^{in}}(z)=\begin{dcases}
0& z=x_i \quad \text{and} \quad d_{x_i^{in}}\neq 0\\
\frac{w_{x_i}}{\sum_{i=1}^n w_{x_i}}& z=x_i \quad\text{and} \quad d_{x_i^{in}}=0\\
\sum_{e^\prime:z\to x_i}\frac{w_{x_i}}{\sum_{x_j\in A} w_{x_j}}\times \frac{w_{e^\prime:z\to x_i}}{\sum_{e}w_{e:z\to A}}\times \frac{w_z}{\sum_{z \text{ tail } e^\prime}w_z}& z\neq x_i\quad\text{and} \quad \exists e^\prime:z\to x_i\\
0&otherwise. 
\end{dcases}\]
Similarly, $\mu_{B^{out}}=\sum_{j=1}^m \mu_{y_j}$, $ \forall 1\leq j\leq m, z\in V(H)$
\[\mu_{y_j^{out}}(z)=\begin{dcases}
0& z=y_j \quad \text{and} \quad d_{y_j^{out}}\neq 0\\
\frac{w_{y_j}}{\sum_{y_i\in B} w_{y_i}}& z=y_j \quad\text{and} \quad d_{y_j^{out}}=0\\
\sum_{e^\prime:y_j\to z}\frac{w_{y_j}}{\sum_{y_i\in B} w_{y_i}}\times \frac{w_{e^\prime:y_j\to z}}{\sum_{e}w_{e:B\to z}}\times \frac{w_z}{\sum_{z \text{ head of } e^{\prime}}w_z}& z\neq y_j\quad\text{and} \quad \exists e^\prime:y_j\to z\\
0&otherwise.
\end{dcases}\]
\end{definition}
\begin{theorem}
Let $\{x_1,\ldots,x_n\}\xrightarrow{e} \{y_1,\ldots, y_m\}$ be a hyperedge in a weighted hypertree. 
 If $\displaystyle \left\lbrace \begin{array}{l}
\#\{  x_i, 1\leq i\leq n:d_{x_i^{in}}= 0\}=k\\\#\{ y_j, 1\leq j\leq m: d_{y_j^{out}}=0\}=k^\prime\end{array} \right\rbrace \text{   then }
{}\kappa(e)=-2+\sum_{i=1}^k\frac{w_{x_i}}{{\sum_{i=1}^n}w_{x_i}}+\sum_{j=1}^{k^\prime}\frac{w_{y_j}}{{\sum_{j=1}^m}w_{y_j}}.$
\end{theorem}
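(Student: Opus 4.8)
The plan is to mirror the unweighted hypertree computation, replacing the uniform source/sink masses $k/n$ and $k^\prime/m$ by their weighted counterparts. Write $S:=\sum_{i=1}^k \frac{w_{x_i}}{\sum_{i=1}^n w_{x_i}}$ for the total mass that $\mu_{A^{in}}$ places directly on the $k$ tail vertices having no incoming hyperedges, and $S^\prime:=\sum_{j=1}^{k^\prime}\frac{w_{y_j}}{\sum_{j=1}^m w_{y_j}}$ for the total hole that $\mu_{B^{out}}$ places directly on the $k^\prime$ head vertices having no outgoing hyperedges. The claimed identity is exactly $\kappa(e)=-2+S+S^\prime$, i.e. $W(\mu_{A^{in}},\mu_{B^{out}})=3-S-S^\prime$, and I would prove this by squeezing $W$ between matching upper and lower bounds, just as in the unweighted theorem.

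First I would record the mass bookkeeping. Reading off the weighted definition, $\mu_{A^{in}}$ assigns mass $\frac{w_{x_i}}{\sum_j w_{x_j}}$ to each of the $k$ source vertices $x_i$ (the case $d_{x_i^{in}}=0$), so the source vertices carry total mass $S$, while the remaining mass $1-S$ is supported on incoming neighbours of the non-source $x_i$. Symmetrically, the $k^\prime$ sink vertices carry total hole $S^\prime$ and the remaining hole $1-S^\prime$ sits on outgoing neighbours of the non-sink $y_j$. Because $e$ lives in a hypertree there are no directed loops and no directed $3$-cycles, so $\mu_0=0$; moreover the absence of shortcuts forces the distances to be exactly $d(x_i,y_j)=1$, $d(x_i,\text{out-neighbour})=d(\text{in-neighbour},y_j)=2$, and $d(\text{in-neighbour},\text{out-neighbour})=3$.

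For the upper bound I would write down the same transport plan as in the unweighted case. Assuming without loss of generality $S\geq S^\prime$, I move $S^\prime$ units of source mass onto the sink holes at distance $1$, then $S-S^\prime$ units of the leftover source mass onto outgoing-neighbour holes at distance $2$, and finally the remaining $1-S$ units of non-source mass onto the remaining $1-S$ outgoing-neighbour holes at distance $3$. This costs $S^\prime+2(S-S^\prime)+3(1-S)=3-S-S^\prime$, so $W\leq 3-S-S^\prime$. For the matching lower bound I would reuse the Lipschitz function $f$ from the unweighted theorem, i.e. $f=3$ on incoming neighbours of the $x_i$, $f=2$ on the $x_i$, $f=1$ on the $y_j$, and $f=0$ elsewhere; one checks as before that $f$ is $1$-Lipschitz for the (asymmetric) distance. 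Feeding $f$ into the dual (Lipschitz) lower bound for $W$ established earlier and using the bookkeeping gives $\sum f\,\mu_{A^{in}}=3(1-S)+2S=3-S$ and $\sum f\,\mu_{B^{out}}=1\cdot S^\prime=S^\prime$, whence $W\geq (3-S)-S^\prime$. The two bounds coincide, giving $W=3-S-S^\prime$ and the stated value of $\kappa(e)$.

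The only genuinely new ingredient over the unweighted proof, and the step I expect to require the most care, is the mass bookkeeping of the second paragraph: one must verify that, despite the three-factor product defining $\mu_{x_i^{in}}$, the total mass landing on each source vertex is precisely $\frac{w_{x_i}}{\sum_j w_{x_j}}$ and that the normalising sums in the edge-weight factor $\frac{w_{e^\prime}}{\sum_e w_{e}}$ and the tail-weight factor $\frac{w_z}{\sum_z w_z}$ telescope to $1$, so that $\mu_{A^{in}}$ is genuinely a probability measure splitting as $S+(1-S)$ between the source vertices and their incoming neighbours (and dually for $\mu_{B^{out}}$). Once this accounting is confirmed, the transport plan and the Lipschitz duality argument carry over verbatim from the unweighted hypertree theorem.
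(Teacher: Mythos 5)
Your proof is correct and follows essentially the same strategy as the paper: the paper states this weighted theorem without proof, and the intended argument is exactly the one it gives for the unweighted hypertree theorem (a transport plan costing $3-S-S'$, matched by the Lipschitz-dual lower bound with the same function $f$ taking values $3,2,1,0$), which is what you adapt. Your extra bookkeeping step --- that each source vertex $x_i$ carries mass $\frac{w_{x_i}}{\sum_{j}w_{x_j}}$ and that the hyperedge-weight and tail-weight normalising factors telescope to $1$, so $\mu_{A^{in}}$ splits as $S+(1-S)$ --- is precisely the right thing to check and is handled correctly.
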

\begin{remark}
The assumption  of theorem 6.4 cannot hold for weighted directed graphs since because of the weights we might have masses which coincide with holes of different sizes. For instance if we consider two directed 3-cycles which have one edge in common, by considering different weights assigned to two other edges in the two cycles, the curvature of the common edge is not one  although we have a 3 set partitioning in which all the connections exists.
\end{remark}

\section{Differences between directed and undirected (hyper)graphs}

\begin{itemize}
\item In directed (hyper)graphs, lower curvature bounds no longer control random walks.
\item Since the Wasserstein distance no longer needs to satisfy a triangle inequality, we cannot define curvatures for vertex sets that are not connected by a hyperedge.  \\
These problems come essentially from the fact that we consider incoming edges at the tail $A$ and outgoing edges at the head $B$ of a hyperedge. In principle, we could of course also consider in-in or out-out relationships instead, but then, we might not always be able to move our masses, and so, curvatures might then become $-\infty$. This can only be avoided if we assume some strong connectedness condition in the directed case ( see for instance \cite{yamada2016ricci}). Such a condition, however, is typically not satisfied in empirical data sets. 
\item The curvature of a directed (hyper)graph might be rather different from that of the underlying undirected (hyper)graph. For instance, in an undirected graph, every edge in a cycle has curvature zero. But a directed cycle where all the edges have the same direction is negatively curved. 

\end{itemize}
\section*{Acknowledgement}

Marzieh Eidi wishes to thank Sima Mehri, Florentin M\"unch and Paolo Perrone for the enlightening discussions.
She also thanks Wilmer Leal for his comments and kind help for drawing hypergraphs. 

\bibliographystyle{plain}

\bibliography{ms}
\end {document}